\newcommand{\therese}[1]{}
\newcommand{\guenter}[1]{}
\newcommand{\erin}[1]{}
\newcommand{\changed}{}
\newtheorem{theorem}{Theorem}
\newtheorem{lemma}{Lemma}
\newtheorem{corollary}{Corollary}
\newtheorem{definition}{Definition}
\newtheorem{claim}{Claim}
\newtheorem{observation}{Observation}
\newtheorem{proposition}{Proposition}
\newcommand{\calA}{{\ensuremath{\cal A}}}
\begin{document}

\iftrue  
\title{Sweeping $x$-monotone pseudolines\thanks{Research initiated during the 2020 Workshop on Graphs and Geometry at the Bellairs Research Institute.  The authors would like to thank all participants of the workshop, but especially Hugo Akitaya and Stefan Felsner, for helpful input.}}
\author{Therese Biedl
\thanks{David R.~Cheriton School of Computer
Science, University of Waterloo,
%
%
Canada.
Supported by NSERC. }
\and Erin Chambers
\thanks{Department of Computer Science and Engineering, University of Notre Dame, Indiana, USA.
Research supported in part by NSF awards 1907612 and 2444309. }
\and Irina Kostitsyna
\thanks{KBR at NASA Ames Research Center, Moffett Field, CA, USA.}
\and G\"unter Rote\thanks{Freie Universit\"at Berlin, Institut f\"ur Informatik, Germany. }
}
\else
\fi

\pagestyle{myheadings}
\markboth{}{Th.\ Biedl, E. Chambers, I. Kostitsyna and G.\ Rote: \
  Sweeping $x$-monotone pseudolines}

\maketitle

\begin{abstract}
We study the problem of sweeping a pseudoline arrangement with $n$
$x$-monotone curves with a rope (an $x$-monotone curve that connects the points
at infinity).  The rope can move by flipping over
a face of the arrangement, replacing parts of it
from the lower to the upper chain of the face. 
Counting as length of the rope the number of edges, what rope-length can be
needed in such a sweep?   We show that all such arrangements can be swept
with rope-length at most $2n-2$, and for some arrangements rope-length 
at least $\tfrac{7}{4}(n-2)+1$ is required.  We also discuss some complexity
issues around the problem of computing a sweep with the shortest rope-length.
  
\end{abstract}

\section{Introduction}

Consider an arrangement $\calA$ of $n$ $x$-monotone infinite curves
where each pair of curves crosses 
exactly once. 
These define a directed acyclic planar
graph $G_\calA$, by replacing each crossing with a new vertex, adding
two vertices $s,t$ at negative and positive infinity, and directing edges left-to-right.  
This paper concerns
the problem of sweeping the arrangement with a rope of short length, or
equivalently, sweeping $G_\calA$ with a sequence of short $st$-paths.
Formally, we start with a rope at the lower hull of the arrangement.
At each step, whenever the rope contains the bottom chain
of an inner face $F$, we may \emph{flip} across $F$ by replacing the bottom chain
by the top chain of $F$. We stop when the rope is the upper hull.
The \emph{rope-length} of such a sweep is the maximum length of the rope, 
measured as the number of edges in the graph.  See Figure~\ref{fig:nFace}.

\begin{figure}[ht]
\centering%
\includegraphics[page=1,scale=1.2,
trim=0 30 0 30,clip]{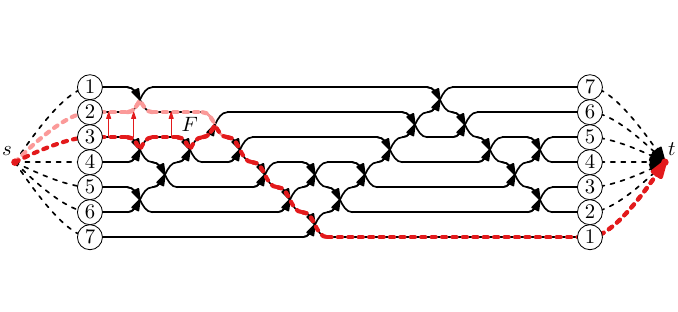}%
\label{fig:flip-face}%
\caption{
A pseudoline arrangement $\calA$ with seven $x$-monotone curves 
and the corresponding graph $G_\calA$.
Rope $\pi$ (red dashed) has length~8 and can be flipped across face $F$.
} 
\label{fig:nFace}
\end{figure}

One can easily construct an arrangement $\calA$ where the lower hull has length
$n$, so we cannot in general hope to find a sweep of rope-length less than $n$.   
But can we always achieve rope-length $n+O(1)$ with a suitable sweep?   
We show that this is false:
for some arrangements we need rope-length at least $\tfrac{7}{4}n-\tfrac{5}{4}$.
We also provide an asymptotically matching upper bound:   For any such
arrangement $\calA$, we can find a sweep with rope-length at most $2n-2$.
Furthermore, the sweep has special properties: we 
simultaneously  sweep the dual graph $G_\calA^*$ of $G_\calA$,
and the two ropes of the two sweeps ``hug'' in some sense.

Finally, we study hardness results.   
A rope in $G_\calA$
corresponds to an edge-cut in $G_\calA^*$, and sweeping with a rope hence
corresponds to finding a vertex order that has small cuts.   This is the
\emph{cutwidth} problem, and since we impose special conditions on the graph
and the sweep, our problem is equivalent to solving {\sc Directed Cutwidth} in
$G_\calA^*$ (definitions and details are in Section~\ref{sec:hardness}).
Surprisingly enough, we have not been able to find NP-hardness
results for this problem, especially not in planar graphs.   We therefore show
that {\sc Directed Cutwidth} is NP-hard even in planar graphs with maximum
degree~6.   
Unfortunately the graphs constructed in the reduction are not duals of pseudoline
arrangements, so the complexity of minimizing the rope-length in
our sweeping problem remains open.

\paragraph{Related results:}   
\changed{The problem of minimizing the rope-length of a sweep is motivated by the problem of enumerating all arrangements
of $n$ pseudolines~\cite{Rote25}}.
An easy upper bound on the rope-length
in a sweep is
the maximum length of an $x$-monotone $st$-path.   However, this does
not lead to a good upper bound:
$x$-monotone paths can have close to $n^2$ edges
 \cite{Mat91,BRSSS04}, see
 \cite{Dum05} for related results.
This shows that it is necessary to choose a sweep carefully.

The idea of ``sweeping a plane graph'' is closely related to the so-called 
\emph{homotopy height}, see \cite{CMO18,BCEMO19,Ophelders2022} for an overview.
Here we are given an undirected planar graph $G$ with a fixed planar embedding and two
vertices $s,t$ on the outer-face.    We are asked to find a sequence of $st$-paths
that begin and end with the two $st$-paths that run along the outer-face.   Consecutive
$st$-paths in the sequence must be related via a limited set of \emph{permitted operations},
which include flipping across a face and introducing or eliminating a spike
along an edge.    The goal is to minimize the maximum path-length in the sequence.
Our problem is hence the same as computing the homotopy height, except
that we 
restrict the set of permitted operations since the path must follow
the edge directions. 

Computing the homotopy height of a graph is in NP \cite{CMO18}, but it is
open whether this problem is NP-hard.     

With both face and spike moves, it is possible to prove that each path in the sequence can be assumed to be weakly simple, and under some restrictions on the input, the sequence of paths is \emph{monotone} in the sense that every face is swept exactly once \cite{CCMOR21}.  
In our setting, where only face moves are allowed, it is unclear if the optimal homotopy will be necessarily monotone, although this seems quite likely to be true; 
the proof in \cite{CCMOR21} relies upon spike moves as well as an underlying Riemannian metric structure for the disk, so it does not readily 
apply in our setting.  

Another concept that
 is somewhat related to our sweep are ``strictly northward b-migrations,'' which were studied in the context of chains on lattices by Brightwell and Winkler~\cite{Brightwell2009}; while they are able to prove that their version is not monotone, their setting is not equivalent to arrangements of pseudolines, and hence does not answer the monotonicity question for our setting.

There is also a relationship between the homotopy height and the height of a planar
straight-line grid-drawing \cite{BCEMO19}; in particular 
this implies that  
every $N$-vertex planar graph $G$ has homotopy height at most $\tfrac{2}{3}N+O(1)$ since $G$ has a planar
straight-line grid-drawing where the smaller dimension is $\tfrac{2}{3}N+O(1)$ \cite{CN98}.
Unfortunately, this does not help to solve our problem, for two reasons.   First, in our sweeps 
we impose stronger restrictions on when we are allowed to flip across a face.
Second, we are sweeping an arrangement of $n$ curves, hence
the corresponding planar graph has $N\in\Theta(n^2)$ vertices and the above bounds are 
meaninglessly big.

As mentioned earlier, sweeping a pseudoline arrangement $\calA$ with a short rope
corresponds to solving {\sc Directed Cutwidth} in the dual graph $G_\calA^*$. 
The (undirected) version {\sc Cutwidth} of this problem is very well-established
in the literature and is known to be NP-hard even in planar graphs with maximum
degree~3 \cite{MonienS88}.
{\sc Cutwidth} is also SSE-hard to approximate within any constant factor
\cite{WuAPL14}.
SSE stands for the \emph{Small Set Expansion conjecture};
we refer the reader to this paper for the definition of ``SSE-hard''
and
other results concerning cutwidth.


\section{Definitions}

Throughout the paper, $\calA$ denotes a set of $n$ $x$-monotone
infinite curves that form a \emph{pseudoline arrangement}, i.e., each
pair of curves has exactly one point in common where the curves
properly cross.
The curves in
$\calA$ are called  \emph{pseudolines}.
Arrangement $\calA$ naturally defines a planar graph $G_\calA$, by replacing every crossing between pseudolines by a vertex, adding an edge whenever two crossings are consecutive on a pseudo-line, adding two vertices $s$ and $t$ that represent the points at negative and positive infinity, and connecting $s$ to the first crossing and $t$ to the last crossing of each pseudo-line.   We direct all edges of $G_\calA$ from left to right, making it a directed acyclic planar graph with exactly one source $s$ and one sink $t$ that are both on the outer-face.    Such a graph is known as a \emph{bipolar orientation},
and many properties are known, see for example \cite{FOR95}.    In particular, 
for any inner face $F$, the boundary consists of two directed paths; in our situation where edges are drawn left-to-right these paths naturally are called the \emph{top chain} and \emph{bottom chain} of~$F$. Their common start-vertex is the \emph{source} $s(F)$ of~$F$, and their common end-vertex is the \emph{sink} $t(F)$ of~$F$.
At any vertex $v\neq s,t$, the incoming edges are consecutive in the clockwise order around $v$, as are the outgoing edges.     In our situation with edges  drawn left-to-right, we can naturally speak of the topmost/bottommost incoming/outgoing edge of a vertex.

A \emph{rope} of $\calA$ is a directed $st$-path $\pi$ in $G_\calA$;   alternatively we can view $\pi$ as an $x$-monotone infinite curve along pseudo-lines.    For any two points $p,p'$ on $\pi$, we use $\pi(p,p')$ to denote the sub-curve between the two points (including $p,p'$). If $\pi$ contains the entire bottom chain of
some inner face $F$, then \emph{flipping rope $\pi$ across $F$} means to create a new rope that is $\pi$ except that the bottom chain $\pi(s(F),t(F))$ of $F$ gets replaced by the top chain of~$F$.    A \emph{sweep} of $\calA$ consists of a sequence $\pi_1,\dots,\pi_k$ of ropes where $\pi_1$ is the lower hull of $\calA$, $\pi_k$ is the upper hull of $\calA$, and consecutive ropes are obtained by flipping across an inner face.    The \emph{rope-length} of such a sweep is the maximum length (measured by the number of edges) among the used ropes, and the problem studied in this paper is to find a sweep that has small rope-length.

Graph $G_\calA$ (and generally any bipolar orientation) naturally gives rise to a dual graph $G_\calA^*$ that is also a bipolar orientation as follows.    Temporarily add an edge $(s,t)$ to $G_\calA$, and let $s^*,t^*$ be the two faces incident to it, with $s^*$ incident to the upper hull of $\calA$.  The vertices of $G_\calA^*$ are now $s^*,t^*$, and one vertex $F$ for each inner face of $G_\calA$.   For every edge $e=u\rightarrow v$ of $G_\calA$, let $F_\ell$ and $F_r$  be the faces that lie to the left and right when walking from $u$ to $v$.   (Since our edges are directed left-to-right, these faces are really above and below $e$, but ``left''/``right'' is the established term in the literature.)   We add to $G_\calA^*$ the \emph{dual edge $e^*$ of $e$}, which is $F_\ell\rightarrow F_r$.   
\todo[inline,caption={},color=lightgray]{(Jul 3) I had hoped to change the orientation of the dual to its reverse (see latex for various discussions why this is better), but there probably is no time for that since it involves redrawing many pictures.   Punt to the journal-version if we ever write one.}
Note that $e$ lies on the top chain of $F_r$ and the bottom chain of $F_\ell$, so in any sweep we must have swept $F_r$ \emph{before} we can sweep $F_\ell$. 
We think of dual graph $G_\calA^*$ as drawn such that each vertex $F$ is placed in the corresponding face of $G_\calA$, and each edge $e^*$ crosses the edge $e$ that it is dual to.   By definition, $e^*$ crosses $e$ from left to right.

\begin{figure}[ht]
\centering%
\includegraphics[page=2,scale=1.2
]{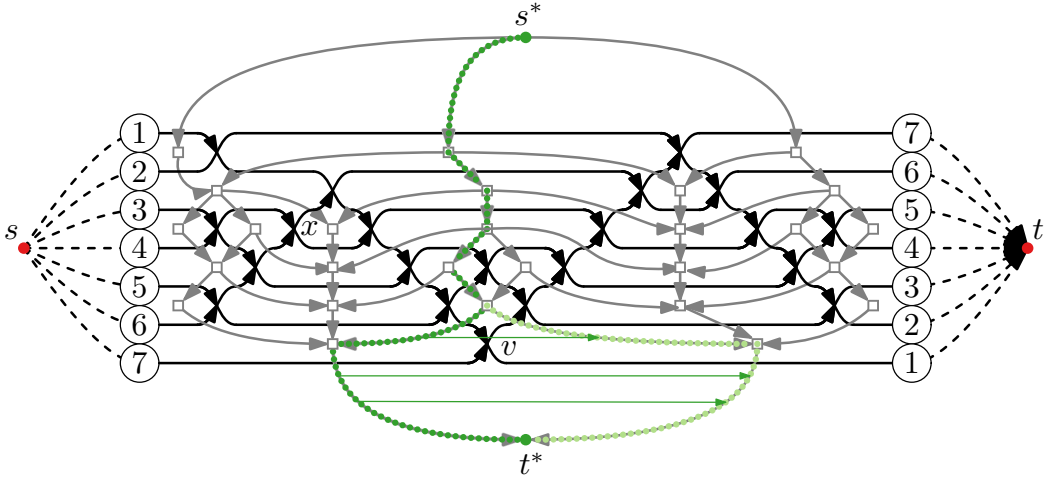}%
\caption{The dual graph $G_\calA^*$ with a dual rope $\pi^*$ (green dotted) that can be flipped across vertex $v$.}
\label{fig:dualGraph}
\end{figure}

Since $G_\calA^*$ is also a bipolar orientation, concepts such as ``rope'' and ``flipping across a face'' can also be applied to $G_\calA^*$.   For ease of distinction, we use the term \emph{dual rope} for a rope in $G_\calA^*$, and \emph{flipping across a vertex (of $G_\calA$)} for the operation of flipping across a face of $G_\calA^*$.
Note that any dual rope $\pi^*$ defines an $st$-cut by virtue of taking the edges 
of $G_\calA$ that it \emph{crossed} (i.e., whose duals it contained), and
symmetrically every rope $\pi$ defines an $s^*t^*$-cut.   Both these cuts are
\emph{directed}, i.e., contain only edges directed from the source-side to the
sink-side.

\section{A lower bound}
\label{sec:lb}

\begin{figure}[htb]
\centering
\includegraphics[scale=1.1,
page=6,trim=30 0 30 0,clip]{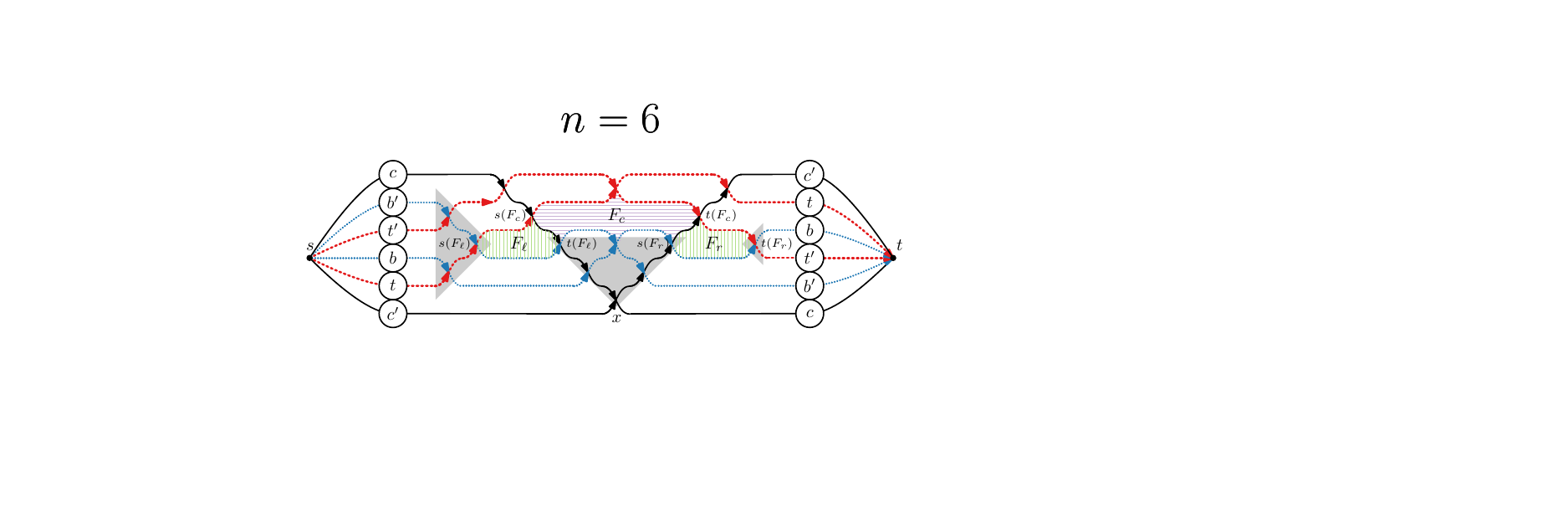}
\caption{Construction for the lower bound for $n=7$ ($K=1$); we
  need rope-length 11.
   The four vertical lines cut the arrangement
   into five 
   sections.
 }
\label{fig:lowerbound}
\label{fig:lowerbound7}
\end{figure}

\begin{theorem}
\label{thm:lower}
For $n=3 \bmod 4$, 
there exists
a pseudoline arrangement $\calA$ of $n$ $x$-monotone curves 
such that any sweep requires rope-length at least $\tfrac{7}{4}n-\tfrac{5}{4}$.
\end{theorem}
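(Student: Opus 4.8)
The statement asks for one explicit family of arrangements, so the proof splits into (i)~describing the arrangement $\calA=\calA_n$ for $n=4K+3$, and (ii)~proving that \emph{every} sweep of $\calA$ must, at some step, carry a rope with at least $\tfrac74 n-\tfrac54$ edges. Part~(i) is a design problem and part~(ii) is a bottleneck argument, and I expect the interplay inside part~(ii) --- that the ``bad'' configuration is at the same time \emph{unavoidable} for every sweep and \emph{expensive} --- to be the real difficulty.

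\textbf{The construction.} I would build $\calA$ by concatenating, from left to right, a constant number of $x$-monotone blocks, namely the five sections separated by the four vertical lines of Figure~\ref{fig:lowerbound}. The crossing pattern would be assembled from $K$ structurally identical ``units'' of four pseudolines each, together with three distinguished pseudolines that run across all blocks. The three rigid pseudolines serve two purposes: they make the lower and upper hulls already have length close to $n$ (so the trivial lower bound $n$ is in force), and they couple the $K$ units so that a sweep cannot make progress inside one unit without dragging a long piece of rope through the others. The units would be arranged so that, in the bottleneck configuration described below, each unit forces a fixed number of rope edges; choosing the unit so that its four new pseudolines buy seven extra forced edges, and accounting the three rigid pseudolines as the additive term, produces exactly the ratio $\tfrac74$ together with the constant $-\tfrac54$. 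For $K=1$ this should be precisely the $n=7$, rope-length-$11$ instance already drawn in the figure, which I would verify by hand as a sanity check before committing to the general design.

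\textbf{The lower bound argument.} Fix an arbitrary sweep $\pi_1,\dots,\pi_m$. The cleanest way to track its progress is in the dual: after step $t$ the set of already-swept faces is a down-set $D_t$ of the dual bipolar orientation $G_\calA^*$, with $D_1=\emptyset$, $D_m$ equal to all inner faces, $|D_{t+1}\setminus D_t|=1$, and the rope $\pi_t$ is exactly the directed edge-cut $\delta(D_t)$ of $G_\calA$; so the theorem is equivalent to a lower bound of $\tfrac74 n-\tfrac54$ on the directed cutwidth of $G_\calA^*$. I would then run a discrete intermediate-value argument: as $t$ grows, $D_t$ sweeps each unit's faces, and by construction there is no step at which one whole block is entirely swept while another whole block is untouched, so some step $t^\star$ must reach a ``balanced'' down-set in which, unit by unit, the lower faces are swept and the upper faces are not, with a prescribed pattern across the four cut lines. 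At $t^\star$ I would lower-bound $|\delta(D_{t^\star})|=|\pi_{t^\star}|$ block by block: inside each block the rope is an $x$-monotone curve whose entry and exit heights are pinned by the balanced condition, and the unit gadget is arranged so that any such curve must pass through the claimed number of crossings; summing over the blocks and adding the forced edges along the rigid pseudolines gives $|\pi_{t^\star}|\ge \tfrac74 n-\tfrac54$. As a last step I would optimize the balance threshold over its continuous range to land the constant exactly on $\tfrac74$ rather than on a weaker value.

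\textbf{Main obstacle.} The heart of the proof is producing a single unit gadget that is rigid enough that a balanced down-set is genuinely unavoidable, yet shaped so that every balanced down-set has a boundary of the required size, and doing both while the constants come out to exactly $\tfrac74$ and $-\tfrac54$ rather than, say, $\tfrac32 n$. A secondary nuisance --- that a single flip could in principle move the crossing level at a vertical cut by more than one if the flipped face straddles the cut --- is precisely why I would phrase everything in terms of the down-sets $D_t$ of $G_\calA^*$, which always grow by exactly one vertex, rather than in terms of the cut levels directly.
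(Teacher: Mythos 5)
There is a genuine gap: what you have written is a plan, not a proof, and the two pieces you defer are exactly the content of the theorem. On the construction side, the ``unit gadget'' of four pseudolines whose design you yourself call the heart of the proof is never specified; the accounting (``four new pseudolines buy seven extra forced edges'') is reverse-engineered from the target bound rather than derived from any concrete crossing pattern. On the argument side, the discrete intermediate-value step is asserted, not established: nothing you describe rules out a sweep that completely finishes one block (or the lower portion of one unit) before touching another, so the existence of a ``balanced'' down-set $D_{t^\star}$ with your prescribed pattern across the four cut lines does not follow for free --- forcing it is again a property the unspecified gadget would have to provide. Your dual reformulation (ropes $=$ directed cuts $\delta(D_t)$ of down-sets that grow one face at a time, so the problem is directed cutwidth of $G_\calA^*$) is correct and matches the paper's Section~\ref{sec:hardness} lemma, but it does not substitute for the missing construction and unavoidability argument.

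The paper's proof uses a different and much more rigid mechanism, which your proposal never identifies. The construction places three faces $F_\ell$, $F_c$, $F_r$ in the strip between two crossing curves $c,c'$; the dual edges $F_c\rightarrow F_\ell$ and $F_c\rightarrow F_r$ force \emph{both} side faces to be flipped before the central one. The bottleneck is not a balanced cut located by a counting argument, but the specific step immediately after the second of $F_\ell,F_r$ is flipped: at that instant the rope must contain the entire top chain of the just-flipped face (length $K+2$), must still run from $t(F_\ell)$ to $s(F_r)$ along the bottom of $F_c$, and must continue from $t(F_c)$ to $t$ over the already-flipped other side face. Half-grids placed before $s(F_\ell)$, below $F_c$, and after $t(F_c)$ (built from $2K{+}1$ ``red'' and $2K$ ``blue'' curves, not from $K$ four-line units) make each of these forced segments cost roughly $2K$ edges, and summing gives $7K+4=\tfrac{7}{4}n-\tfrac{5}{4}$. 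If you want to salvage your outline, the precedence trick (a face that can only be swept after two far-apart faces, each of which is expensive to reach and to leave) is the idea to aim for; a balance/intermediate-value argument alone will not pin the rope down to a single forced configuration.
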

\begin{proof}
The construction is symmetric, and we describe it from left to right,
see \autoref {fig:lowerbound7} for the construction for $n=7$ and \autoref{fig:lowerbound15} 
for $n=15$. Start with two curves $c,c'$
(black solid) that are at the top and bottom at the far left and
intersect in some point~$x$.   All other curves will pass above $x$.
Set $K=\tfrac{n-3}{4}$.
Between $c$ and $c'$ at the far left are $2K+1$ ``top'' curves (red, dashed)
at even positions, and $2K$ ``bottom'' curves (blue, dotted) at odd
positions.

\begin{figure*}[t]
\hbox to \linewidth{\hss  \includegraphics[scale=0.64,
  page=7]{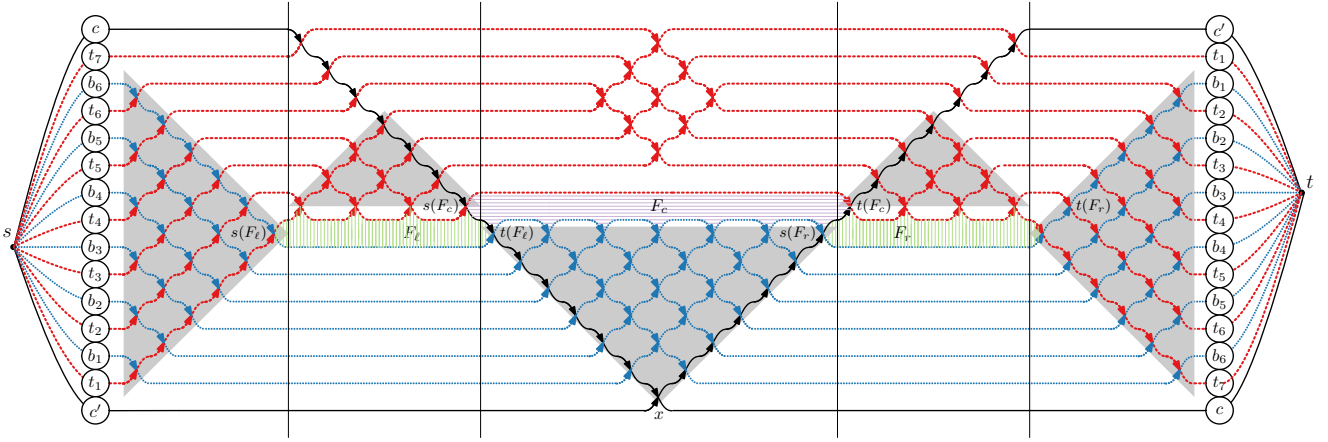}\hss}
\caption{The lower-bound construction with $n=15$ pseudolines ($K=3$);
  we need rope-length 25.}
\label{fig:lower_bound_15}
\label{fig:lowerbound15}
\end{figure*}

\begin{figure*}[t]
  \centering
\includegraphics{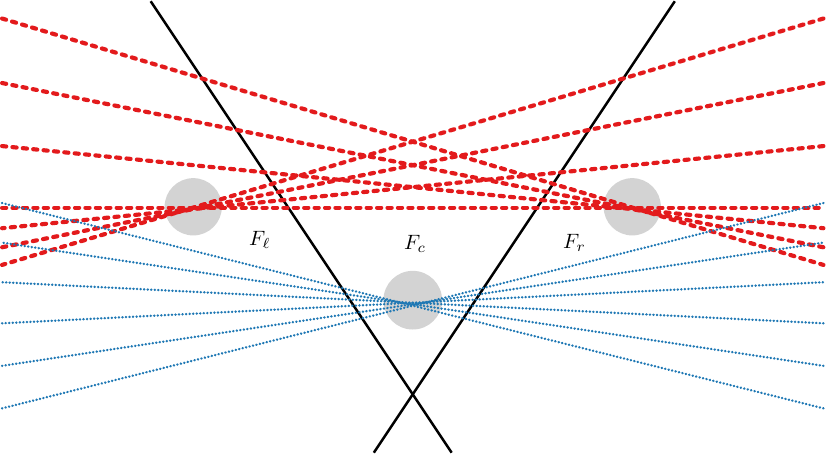}
\caption{
The lower-bound example for $n=15$ as an arrangement of straight lines.
The slopes of the seven red (dashed) and six blue (dotted) lines are evenly spaced, with red and
blue slopes interleaving. This ensures the appropriate intersection pattern
when the lines are extended far enough to the left and right.
In the three shaded disks, the lines are slightly perturbed from a common intersection point
so that they 
become incident to $F_\ell$,
$F_c$, and $F_r$, respectively.
}
  \label{fig:stretchable}
\end{figure*}

The arrangement consists of five consecutive sections, as indicated by
the vertical lines in the figure.
In the first section, the red curves move up and the blue curves move down
until they are separated, forming a $2K\times 2K$ half-grid
(shown shaded in Figures \ref {fig:lowerbound} and~\ref {fig:lowerbound15}).
So far there are no intersections between curves of the
same color. In the area below all red curves and above all blue
curves, there are three faces $F_\ell$, $F_c$, and $F_r$, separated from
each other by $c$ and $c'$.

Before the $2K{+}1$ red curves cross $c$, we let the lower $K{+}1$ of 
them cross each other in such a way that they all become incident to the top chain of
$F_\ell$. These curves, together with $c$, hence create a
$(K{+}1)\times (K{+}1)$ half-grid,
which forms the second section. (In terms of sorting networks, 
this half-grid is the \emph{bubble-sort} network.)

In the middle section, in the area above $x$, we do two things:
a) We cross the blue curves in such a way that they all become incident
to the bottom chain of $F_c$, forming a $(2K{+}1)\times(2K{+}1)$ half-grid together
with $c$ and $c'$.
b) We cross the upper $K$ with the lower $K$ red curves (the middle red curve 
remains uncrossed, as it meets all other red curves in the half-grids
above $F_\ell$ and $F_r$).

The right part of the construction is symmetric.
As shown in Figure~\ref{fig:stretchable},
this arrangement can even be drawn with straight lines.
Observe the following properties of $x$-monotone paths in the construction:
\begin{itemize}
\item Any $x$-monotone path from $s$ to the source $s(F_\ell)$ of $F_\ell$
	has length at least $2K$.
	This holds because such a path must traverse the $2K\times 2K$ half-grid, plus the edge from $s$ to reach the half-grid.
\item Any $x$-monotone path $\pi$ from $t(F_c)$ to $t$ has length at least $2K+1$.
This is obvious if $\pi$ walks along $c'$ until the intersection with the last red curve (and from there to $t$). 
So assume that it walks along $c'$ for $i<2K$ edges and then turns onto a red curve that brings us (perhaps after some more edges) to the half-grid right of $t(F_r)$.     It then 
	traverses a $(2K{-}i)\times (2K{-}i)$  half-grid, which takes $2K{-}i$
	edges, plus one more edge to $t$.  Hence the path
	has length at least $2K{+}1$.

\item Any $x$-monotone path from $t(F_\ell)$ to $s(F_r)$ has length at least $2K$, because it must go across the
$(2K{+}1)\times (2K{+}1)$ half-grid below $F_c$ and can (at best) 
	use shortcuts along the bottom chain of $F_c$.   
\end{itemize} 

Now we come to the actual proof.   Consider any sweep of $\calA$.   Since
the dual graph has edges $F_c\rightarrow F_\ell$ and $F_c\rightarrow F_r$, we must 
flip across both $F_\ell$ and $F_r$ before flipping across $F_c$.   By symmetry we
may 
assume that we flip across $F_r$  first, and consider the rope $\pi$ immediately after 
we flipped across $F_\ell$.   Then $\pi$ goes from $s$ to $s(F_\ell)$, from there along
the top chain of $F_\ell$ to $t(F_\ell)$, from there to $s(F_r)$ and $t(F_c)$ 
(since we have flipped across $F_r$ but not $F_c$ yet), and from there to $t$.
So
\begin{eqnarray*}
|\pi|& = &
|\pi(s,s(F_\ell))| 
+\text{length of top chain of $F_\ell$} \\ 
&& +|\pi(t(F_\ell),s(F_r))| 
+1
+|\pi(t(F_c),t)|\\ 
& \geq & 2K+K{+}2+2K+1+2K{+}1=7K+4
\end{eqnarray*}
which is at least $7\tfrac{n-3}{4}+4=\tfrac{7}{4}n-\tfrac{5}{4}.$
\end{proof}

The lower bound that we have proved is tight for these instances:
The primal-dual sweep  that we present in the next section
achieves ropelength $(7n-5)/4=7K+4$: 

\begin{proposition}
\label  {prop:proof-example}
The pseudoline arrangement $\calA$ of
\autoref{thm:lower}, consisting of
$n$ $x$-monotone curves, is swept by the primal-dual sweep of
Section~\ref{sec:up-sweep}
with
rope-length $\tfrac{7}{4}n-\tfrac{5}{4}$.
\end{proposition}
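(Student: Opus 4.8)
The plan is to reduce the claim to a matching upper bound and then follow the primal--dual sweep on this specific arrangement step by step. By \autoref{thm:lower}, every sweep of $\calA$ has rope-length at least $\tfrac{7}{4}n-\tfrac{5}{4}=7K+4$, so it suffices to prove that the primal--dual sweep of Section~\ref{sec:up-sweep} never produces a rope longer than $7K+4$ on $\calA$. Note that $7K+4$ is about $n/4$ below the generic guarantee $2n-2=8K+4$, so this cannot be read off from the general upper bound; the argument has to exploit the structure of $\calA$.

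I would organize the analysis along the five sections of the construction and the three distinguished faces $F_\ell$, $F_c$, $F_r$. Away from these three faces, $\calA$ is combinatorially a disjoint union of a constant number of half-grids: the two $2K\times 2K$ red/blue half-grids of sections~1 and~5, the two $(K+1)\times(K+1)$ half-grids above $F_\ell$ and above $F_r$, the $(2K+1)\times(2K+1)$ half-grid below $F_c$, and the small $K\times K$ block where (roughly) the upper $K$ red curves cross the lower $K$. The first ingredient I would isolate is that while the primal--dual sweep is processing such a half-grid, the part of the rope lying inside it stays within $O(1)$ of the short side of that grid, i.e., the sweep advances through the grid ``the cheap way''. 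I expect this to follow directly from the invariants of the sweep proved in Section~\ref{sec:up-sweep} (that the primal rope is the boundary of the already-swept region and ``hugs'' the dual rope, so that it cannot run along a long side of a half-grid for more than a bounded stretch).

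With this in hand, the bound $|\pi_i|\le 7K+4$ would come from estimating the rope section by section and adding up, together with one global scheduling observation: the dual edges $F_c\to F_\ell$ and $F_c\to F_r$ force the sweep to flip across both $F_\ell$ and $F_r$ before $F_c$. At the step just after $F_\ell$ (and already after $F_r$) has been flipped, the rope runs from $s$ to $s(F_\ell)$, along the top chain of $F_\ell$ to $t(F_\ell)$, from there to $s(F_r)$, one more edge to $t(F_c)$, and finally from $t(F_c)$ to $t$; using the $x$-monotone path-length bounds recorded just before the proof of \autoref{thm:lower} --- together with the fact that at that point the left half-grid and the half-grid above $F_\ell$ have already been swept, while the half-grid below $F_c$ contributes only shortcuts along the bottom chain of $F_c$ --- this rope has length exactly $2K+(K+2)+2K+1+(2K+1)=7K+4$. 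For every other step I would show that at least one of these four contributions is short (because its half-grid has not yet been entered, or because it has already been swept far enough that the rope now uses shortcuts along a chain of $F_\ell$, $F_c$, or $F_r$), and combine this with the per-half-grid estimate to conclude $|\pi_i|\le 7K+4$. The right half of the construction is the mirror image of the left, and the primal--dual sweep respects this symmetry, so no new cases arise.

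The main obstacle, I expect, is precisely the scheduling and accounting of the last step: one must be sure that the primal--dual sweep never produces a wasteful intermediate rope --- in particular, one that runs along the long side of two large half-grids at once, or that builds the humps over $F_\ell$ and $F_r$ before the big corner half-grids have been cleared --- since such a rope would exceed $7K+4$. Establishing that this never happens is exactly what the order in which the primal--dual sweep processes faces (dictated by the dual order together with the hugging invariant) is designed to guarantee. Re-deriving those properties from scratch would duplicate Section~\ref{sec:up-sweep}, so the cleanest route is to invoke them and then merely verify the four numeric bottlenecks ($|\pi(s,s(F_\ell))|$, the length of the top chain of $F_\ell$, $|\pi(t(F_\ell),s(F_r))|$, and $|\pi(t(F_c),t)|$) against the target value $7K+4$.
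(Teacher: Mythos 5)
Your reduction is right: since \autoref{thm:lower} already gives the lower bound $7K+4=\tfrac74 n-\tfrac54$ for every sweep, it suffices to show the primal--dual sweep never exceeds $7K+4$ on this arrangement, and you also correctly identify the critical rope (just after the second of $F_\ell,F_r$ is flipped) and compute its length $2K+(K{+}2)+2K+1+(2K{+}1)=7K+4$. But what you have written is a plan rather than a proof, and the part you defer is exactly the substance of the argument. Your key lemma --- that while the sweep processes a half-grid the rope inside it stays ``within $O(1)$ of the short side'' --- is asserted, not proved (``I expect this to follow directly from the invariants''), and even if true it is too weak: an exact bound of $7K+4$ leaves no room for unspecified $O(1)$ slacks accumulated over five sections, so the accounting must be carried out with explicit constants. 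Likewise, ``for every other step I would show that at least one of these four contributions is short'' is precisely the verification that needs to be done, not a step that can be postponed; you yourself flag it as ``the main obstacle'' without resolving it. Finally, the appeal to left--right symmetry is shaky: the construction is symmetric, but the left-first greedy sweep is not (the appendix example shows that reflecting an arrangement can change the greedy rope-length drastically), so one must actually determine in which order the greedy sweep visits $F_\ell$, $F_r$, $F_c$ rather than argue ``no new cases arise.''

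The paper's proof fills these gaps with a different and much more economical accounting device. It fixes the layered drawing of the construction, notes that the vertices lie in $2K+(2K{+}1)+(4K{+}1)+(2K{+}1)+2K=12K+3$ vertical layers, so every rope has length at most $12K+4$ a priori, and then shows that at every moment the rope contains long edges that \emph{skip} at least $5K$ layers, which yields $7K+4$. This is verified by tracing the greedy sweep phase by phase: first the left part is swept until the rope follows $c$ (during which the rope contains the long rightmost section of $c$, skipping $6K+1$ layers); afterwards the rope always contains the long first section of $c$ ($2K$ layers); then the faces below $F_r$ are flipped with ripples to the left until the rope runs along the lower boundary of $F_c\cup F_r$ (skipping $2K+(2K{+}1)$ layers); the flip over $F_r$ produces the critical rope with exactly $3K$ skipped layers, i.e.\ length $7K+4$; and after flipping $F_c$ the rope always skips at least $4K+1$ layers. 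If you want to complete your own route instead, you would have to prove a quantitative version of your half-grid claim and pin down the greedy schedule explicitly --- at which point you would essentially be reconstructing this phase-by-phase trace, so adopting the layer-skipping bookkeeping is the cleaner way to finish.
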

We give the proof in Section \ref{proof-example}

\section{An upper bound: The 
coordinated  
  primal-dual sweep}
\label{sec:up-sweep}

We now show an upper bound on the required rope-length by defining a sequence of ropes in $G_\calA$ and simultaneously a sequence of dual ropes that ``hug'' the ropes.
To define this, we first need
a few other definitions and observations 
about a rope $\pi$ and a dual rope $\pi^*$
(see also Figure~\ref{fig:ropes-hug}).

\begin{figure}[ht]%
\centering
\includegraphics[page=3,scale=1.2
  ]{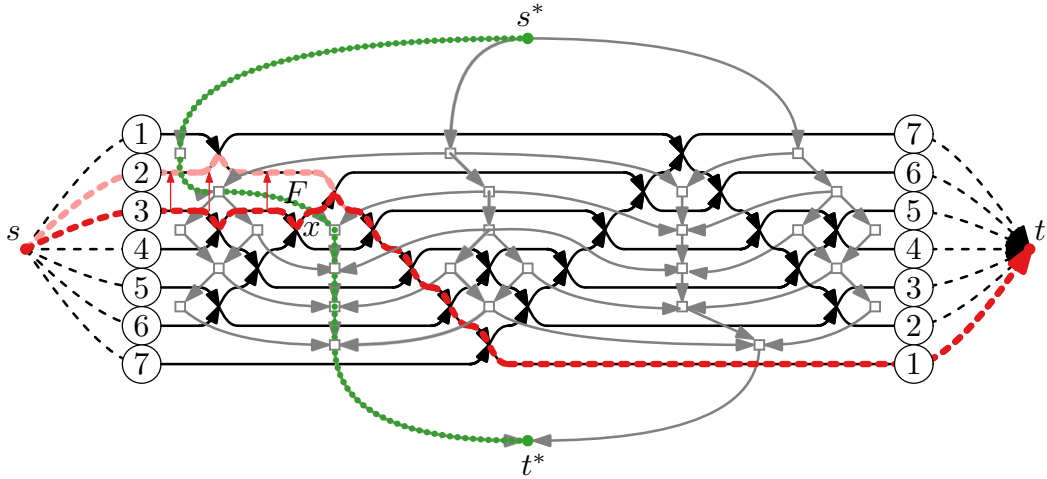}
\caption{A rope and a dual rope that hug each other. We can flip
across face $F$, which is to the left of the active edge.}
\label{fig:ropes-hug}
\end{figure}

Rope $\pi$ connects $s$ to $t$, hence must go across the directed $st$-cut defined by $\pi^*$, and can do so only once since $\pi$ is directed.   It follows that exactly one edge $e$ of $\pi$ is crossed by $\pi^*$; we call $e$ the \emph{active edge} and let $x$ be the point where it is crossed by $\pi^*$.
This \emph{crossing-point} $x$ splits the rope into two parts $\pi(s,x)$ and $\pi(x,t)$, and likewise splits
the dual rope into $\pi^*(s^*,x)$ and $\pi^*(x,t^*)$, and the properties that we
require will depend on which part we are in.

\begin{definition}   We say that a rope $\pi$ and dual rope $\pi^*$ \emph{hug each other}
if the following four (symmetric) conditions hold:   (1) for every edge $e$ in $\pi(s,x)$, the face to the left of $e$ belongs to $\pi^*$; (2) for every edge $e$ in $\pi(x,t)$, the face to the right of $e$ belongs to $\pi^*$; (3) for every edge $e^*$ in $\pi^*(s^*,x)$, the face of $G_\calA^*$ (hence vertex of $G_\calA$) to the right of $e^*$ belongs to $\pi$; (4) for every edge $e^*$ in $\pi^*(x,s^*)$, the vertex of $G_\calA$ to the left of $e^*$ belongs to $\pi$.  
\end{definition}

We will now define a sequence of \emph{rope pairs} (i.e., pairs of a rope $\pi$ and a 
dual rope $\pi^*$) such that the ropes sweep $G_\calA$, the dual ropes sweep 
$G_\calA^*$, and at all times $\pi$ and $\pi^*$ hug each other.    Then we argue that 
this implies rope-length at most $2n-2$ at all times.   We initialize
rope $\pi$ as the lower hull of $\calA$, so all edges of $\pi$ have $t^*$ to their right.
We initialize the dual rope $\pi^*$ to contain all faces incident to $s$, in order from top to bottom, so all edges of $\pi^*$ have $s$ to their right.
The active edge is the bottommost outgoing edge of $s$, 
and one easily verifies all conditions. 
(\autoref{sec:long-instance}
shows an example of a sweep from the beginning.)
To explain how to update the rope pair, we need some observations. 

\begin{claim}
\label{obs:topIncoming}
\label{cl:topIncoming}
(1) At any vertex $v\neq s$ of $\pi(s,x)$,  rope $\pi$ uses the top incoming edge. 
(2) At any vertex $v\neq t$ of $\pi(x,t)$,  rope $\pi$ uses the bottom outgoing edge. 
(3) At any face $F\neq s^*$ of $\pi^*(s^*,x)$,  dual rope $\pi^*$ crosses the first
edge of the top chain of~$F$.
(4) At any face $F\neq t^*$ of $\pi^*(x,t^*)$,  dual rope $\pi^*$ crosses the last 
edge of the bottom chain of~$F$.
\end{claim}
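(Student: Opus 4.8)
The plan is to prove all four statements by induction on the number of flips performed so far, exploiting the symmetry of the four conditions (statements (2),(3),(4) are the images of (1) under reflecting $G_\calA$ left-right, up-down, or both). So I would really only argue statement (1) in detail and then remark that the other three follow by the analogous argument in the dual or the mirrored picture. For the base case, the rope $\pi$ is the lower hull and $x$ is the crossing-point on the bottommost outgoing edge of $s$; then $\pi(s,x)$ consists of the single edge $s\to \cdot$, there is no vertex $v\neq s$ on it, so (1) holds vacuously. (Similarly $\pi^*(s^*,x)$ is empty or a single edge, etc.) So the content is entirely in the inductive step.

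For the inductive step, suppose the current rope pair $(\pi,\pi^*)$ satisfies all four conditions, and we perform one update — which, from the coordinated sweep of Section~\ref{sec:up-sweep}, means flipping $\pi$ across the face $F$ immediately to the left of the active edge $e$, and correspondingly advancing $\pi^*$ across the vertex $t(F)$ (or $s(F)$, depending on which endpoint of $e$ lies on the chain being flipped — I'd fix the convention used in the paper). I would split into cases according to whether a newly-inserted vertex of $\pi$ lands in the $(s,x)$-part or the $(x,t)$-part relative to the \emph{new} crossing-point $x'$. The vertices of $\pi$ that change are: the interior vertices of the bottom chain of $F$ (which leave $\pi$), and the interior vertices of the top chain of $F$ (which enter $\pi$); also the roles of $s(F)$ and $t(F)$ relative to $x$ versus $x'$ may flip. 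For a vertex $w$ on the top chain of $F$ that now lies in $\pi(s,x')$: since it is on the top chain of $F$ and $F$ lies above the part of $\pi$ we just flipped, the edge of $\pi$ entering $w$ is an edge of the top chain, and I need to argue this is the \emph{top} incoming edge of $w$. This is where I'd use the structure of a bipolar orientation: at $w$ the incoming edges are consecutive in clockwise order, and the face $F$ sits in the wedge of incoming edges on the side we flipped from; combined with condition (1) for $w$'s predecessor on the old rope (or with $F$ being the leftmost face at $w$), the edge of $\partial F$ entering $w$ must be the topmost incoming edge. The vertices inherited unchanged from $\pi(s,x)$ keep condition (1) by the inductive hypothesis, provided they are still in the $(s,x')$-part — which they are, since $x'$ lies on the top chain of $F$, hence weakly to the right of $x$ along $\pi$, so $\pi(s,x)\subseteq\pi(s,x')$ as point sets up to the flipped region.

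I expect the main obstacle to be bookkeeping the exact position of the new active edge $e'$ and hence the new crossing-point $x'$ after a coordinated flip, and confirming that the set of vertices/faces falling on each side of $x'$ is exactly what the four conditions need — in particular handling the boundary vertices $s(F)$ and $t(F)$, which may switch sides of the crossing-point. A clean way to manage this is to first establish (as a sub-lemma, or by appeal to the ``hug'' invariant proved in the surrounding text) that after a coordinated flip the pair still hugs and that $x'$ is the crossing-point on the appropriate edge of $\partial F$; given that, the four statements of the claim become local assertions at a single vertex or face, each following from the consecutiveness of incoming (resp.\ outgoing) edges around a vertex in a bipolar orientation together with the fact that the rope/dual-rope hugs $F$ on the correct side. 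I would present statement (1), note that (3) is exactly (1) applied to the dual bipolar orientation $G_\calA^*$ (whose ``top incoming edge at a vertex'' is the ``first edge of the top chain'' of the corresponding face), and that (2) and (4) follow by the left-right reflection that swaps $s\leftrightarrow t$, $s^*\leftrightarrow t^*$ and the $(s,x)$- with the $(x,t)$-parts.
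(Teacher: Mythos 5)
Your plan goes by induction over the flips of the coordinated sweep, but the claim needs no induction at all: it is a static consequence of the hugging conditions for a single rope pair. For (1), if the rope's incoming edge $e$ at a vertex $v\neq s$ of $\pi(s,x)$ were not top incoming, then the face $F$ to the left of $e$ would have two incoming edges at $v$, so $v=t(F)$ and $e$ is the last edge of the bottom chain of $F$; by hugging, $F$ lies on $\pi^*$, and the outgoing edge of $\pi^*$ at $F$ crosses the bottom chain of $F$, which places $v=t(F)$ on the $t$-side of the cut defined by $\pi^*$ --- contradicting $v\in\pi(s,x)$. Besides replacing this short argument by heavy machinery, your route as written is circular: you propose to ``appeal to the hug invariant proved in the surrounding text,'' but in the paper the preservation of hugging (Claim~\ref{cl:canFlip}), and even the legality and effect of each coordinated step (that the whole bottom chain of $F$ lies on $\pi$, that all incoming edges of $v$ are crossed by $\pi^*$, and where the new active edge sits), are proved \emph{using} Claim~\ref{cl:topIncoming}. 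To make your induction sound you would have to prove the claim, the legality of each step, and hugging preservation in one simultaneous induction, which you do not set up. You also misdescribe the step itself: each move of the sweep is either a face-flip of $\pi$ or a vertex-flip of $\pi^*$, never both at once.

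The ``main obstacle'' you flag is exactly where your sketch breaks. After a face-flip across $F$, the new crossing point $x'$ lies on the \emph{first} edge of the top chain of $F$ (this is statement (3) applied to the pre-flip pair, so even locating $x'$ needs the claim), hence every interior vertex $w$ of the top chain lands in $\pi(x',t)$ and must satisfy condition (2), not (1). The only top-chain vertex in $\pi(s,x')$ is $s(F)$, whose incoming rope edge is unchanged, so the case you argue in detail is essentially empty; meanwhile the assertion you would be proving there --- that the top-chain edge entering an interior top-chain vertex is its topmost incoming edge --- is false (locally $F$ is the face \emph{below} such a $w$, so the rope enters along the bottom incoming edge and leaves along the bottom outgoing edge, which is what (2) requires). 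The content of the inductive step is precisely the bookkeeping of $x'$ and of which of (1)--(4) applies to the new vertices and to $F$ itself, and this is left open in your proposal. The symmetry reductions (deducing (2)--(4) from (1) via the dual and the left-right reflection) are fine and match the paper; my recommendation is to keep those and replace the induction by the direct derivation from hugging sketched above.
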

\begin{proof}
We only prove the first claim, the other three are symmetric.
Let $e$ be the incoming edge of $v$ on $\pi$, and
assume for contradiction that $e$ is not top incoming. 
Then the face $F$ to the left of $e$ is incident to two incoming edges of $v$,
hence $v=t(F)$ and $e$ is the last edge of the bottom chain of~$F$.
By the hugging-condition $F$ belongs to $\pi^*$; 
the next edge on $\pi^*$ hence crosses the bottom chain of~$F$.   But then $v=t(F)$ is on the $t$-side of the $st$-cut
defined by the dual rope $\pi^*$, contradicting that $v\in \pi(s,x)$.
\end{proof}


\begin{claim}
\label{cl:canFlip}
Let $e$ be the active edge and let $v$ be its head and $F$ be the face to its left.   If $F\neq s^*$ or $v\neq t$, then we can flip $\pi$ across $F$ or flip $\pi^*$ across $v$, 
and the new pair of rope and dual rope hug each other.
\end{claim}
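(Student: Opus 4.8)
The statement gives us the active edge $e$ with head $v$ and left face $F$, and (excluding the terminal case $F=s^*$ and $v=t$) asks us to show that we can perform one of the two flips and restore the hugging property. The natural plan is a case analysis on the local structure at $v$ and $F$, driven by Claim~\ref{cl:topIncoming}. I would first observe that $e$ is by definition the unique edge of $\pi$ crossed by $\pi^*$, so $e$ is the last edge of $\pi(s,x)$ and the first edge of $\pi(x,t)$; correspondingly $F$ is the last face of $\pi^*(s^*,x)$ (it lies to the left of $e$, hence on $\pi^*$) and the face to the right of $e$ is the first face of $\pi^*(x,t^*)$. Then split into two cases according to whether the crossing point $x$ is ``at'' $v$ in the primal sense or ``at'' $F$ in the dual sense — more precisely, whether $e$ is the last edge of the bottom chain of $F$ (equivalently $v=t(F)$) or not.

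\textbf{Case 1: $e$ is \emph{not} the last edge of the bottom chain of $F$.} Then $v \neq t(F)$, so $v$ has another outgoing edge $e'$ below $e$ that also borders $F$ (the next edge of the bottom chain of $F$). I claim then that $\pi$ already contains the entire bottom chain of $F$, so we may flip $\pi$ across $F$. To see this: the source $s(F)$ and $t(F)$ are on $\pi$ because, walking along $\pi^*(s^*,x)$, by Claim~\ref{cl:topIncoming}(3) each internal face is entered through the first edge of its top chain, which means its source lies on $\pi$; chasing this back and using the hugging conditions on the $\pi^*(s^*,x)$ side forces the bottom chain of $F$ to coincide with a subpath of $\pi(s,x)$. (This is where I would be careful: I need the hugging condition (1), ``face to the left of every edge of $\pi(s,x)$ is on $\pi^*$,'' together with (3), to pin down that $\pi$ runs exactly along $\partial F$ from $s(F)$ to $v$, and then $e'$ continues it.) Once we flip $\pi$ across $F$ to get $\pi_{\mathrm{new}}$, the new active edge is the first edge of the top chain of $F$; I then verify the four hugging conditions for $(\pi_{\mathrm{new}},\pi^*)$. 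Conditions (3),(4) for $\pi^*$ are easy since $\pi^*$ is unchanged and only the crossing point moved one ``step'' earlier along $\pi^*$. Conditions (1),(2) for $\pi_{\mathrm{new}}$: the edges removed (bottom chain of $F$) had $F$ to their left, the edges added (top chain of $F$) have $F$ to their right and $F\in\pi^*$, and every other edge of $\pi$ is unchanged and its side-face membership in $\pi^*$ is unchanged.

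\textbf{Case 2: $e$ \emph{is} the last edge of the bottom chain of $F$, i.e.\ $v=t(F)$.} This is the dual-symmetric situation: now I claim $\pi^*$ contains the entire ``bottom chain'' of the dual face $v$ (i.e., all of $v$'s incoming dual edges form a subpath of $\pi^*$ ending at the dual of $e$), so we may flip $\pi^*$ across $v$. The argument mirrors Case~1 with the roles of primal/dual swapped, using Claim~\ref{cl:topIncoming}(1),(2) at $v$: by (1) the incoming edge of $\pi$ at $v$ is top incoming, and by (2) the outgoing edge of $\pi$ at $v$ is bottom outgoing, so $\pi$ passes ``through the middle'' of $v$, which is exactly the condition that makes the incoming dual edges of $v$ available to $\pi^*$ — I would check that hugging conditions (3),(4) force $\pi^*$ to run along the incoming dual star of $v$ up to $e^*$, and that the dual face on the far side is the one we need. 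The exclusion ``$F\neq s^*$ or $v\neq t$'' enters precisely here and in making the two cases exhaustive: if $v=t$ then $e$ is an edge into $t$ and there is no bottom-outgoing edge, pushing us toward Case~1 (flip $\pi$ across $F$), unless additionally $F=s^*$, in which case $e$ is the topmost edge into $t$ with $s^*$ above it and the sweep is genuinely finished — this is the one configuration that is correctly excluded. After flipping $\pi^*$ across $v$, the new active edge is the top outgoing edge of $v$, and I verify the four hugging conditions analogously.

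\textbf{Main obstacle.} The routine part is the verification of the four hugging conditions after each flip; the genuinely delicate part is showing that the flip is \emph{available} — that in Case~1 the rope $\pi$ really does contain the whole bottom chain of $F$ (and dually in Case~2). This requires combining two of the four hugging conditions (one primal, one dual) rather than just reading off one of them, and it is easy to get the orientation of ``left/right'' backwards given the paper's warning that ``left/right'' really means ``above/below.'' I would prove it as a short separate sub-claim: \emph{if $F$ is the face to the left of the active edge and $F\neq s^*$, then $\pi$ contains the bottom chain of $F$; dually, if $v$ is the head of the active edge and $v\neq t$, then $\pi^*$ contains all incoming dual edges of $v$ in order.} Given that sub-claim, the rest of Claim~\ref{cl:canFlip} is bookkeeping, and the two cases are exhaustive because either $v=t(F)$ or not, with the excluded configuration $F=s^*, v=t$ being exactly the one where neither flip is defined.
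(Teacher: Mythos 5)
Your overall architecture (case split on whether $v=t(F)$, use Claim~\ref{cl:topIncoming} to establish availability of a flip, then local verification of the four hugging conditions) matches the paper's, but you have attached the two flips to the \emph{wrong} cases, and this makes both halves of the argument fail. Note first that $v=t(F)$ (i.e.\ $e$ is the last edge of the bottom chain of $F$) is equivalent to $e$ \emph{not} being the top incoming edge at $v$. In your Case~1 ($v\neq t(F)$, so $e$ is top incoming) you claim $\pi$ contains the whole bottom chain of $F$ and flip $\pi$ across $F$. But the bottom chain of $F$ continues past $v$ along the \emph{top} outgoing edge of $v$, whereas $v$ lies on $\pi(x,t)$ (the crossing point $x$ is interior to $e$), so by Claim~\ref{cl:topIncoming}(2) the rope leaves $v$ along the \emph{bottom} outgoing edge; since $v$ is a crossing of two pseudolines these differ, so $\pi$ diverges from the bottom chain of $F$ at $v$ and the face flip is simply not available. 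Dually, in your Case~2 ($v=t(F)$) you propose flipping $\pi^*$ across $v$, which requires $\pi^*$ to cross \emph{all} incoming edges of $v$; but $\pi^*$ passes through $F$ only once, entering through the first edge of the top chain of $F$ (Claim~\ref{cl:topIncoming}(3)) and leaving through $e$, so it does not cross the top incoming edge of $v$ (the last edge of the top chain of $F$) whenever that top chain has more than one edge. Your appeal to Claim~\ref{cl:topIncoming}(1) at $v$ is also misapplied: $v\in\pi(x,t)$, so part~(1) says nothing about the incoming edge at $v$, and indeed in this case $e$ is not top incoming.

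The correct assignment is the opposite one. If $e$ is \emph{not} top incoming (so $v=t(F)$), then $F\neq s^*$, and since every vertex of $\pi(s,x)$ is entered along its top incoming edge, walking backwards from $v$ shows $\pi$ coincides with the entire bottom chain of $F$ (at each internal vertex of that chain the preceding chain edge \emph{is} the top incoming edge), so one flips $\pi$ across $F$; the new active edge is the first edge of the top chain of $F$ by Claim~\ref{cl:topIncoming}(3). If $e$ \emph{is} top incoming, the exclusion forces $v\neq t$, and starting from the face $F'$ to the right of $e$ (which lies on $\pi^*(x,t^*)$) one applies Claim~\ref{cl:topIncoming}(4) repeatedly to conclude that $\pi^*$ crosses every incoming edge of $v$, so one flips $\pi^*$ across $v$; the new active edge is the bottom outgoing edge of $v$ by Claim~\ref{cl:topIncoming}(2). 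Your concluding hugging verifications are fine in spirit, but as written they are attached to flips that cannot be performed, so the proof as proposed does not go through. (Your remark that $v=t$ ``pushes us toward Case~1'' is also inconsistent with your own case split: $v=t$ forces $v=t(F)$, i.e.\ your Case~2.)
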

\begin{proof}  
The claim is illustrated in Figure~\ref{fig:can-flip}.
Assume first that $e$ is not top incoming, which implies 
that it is the last edge of the bottom chain of~$F$.   
We know that $F\neq s^*$ since all edges incident to $s^*$ are top incoming.
Since $\pi(s,x)$ only uses top incoming edges, $\pi$ must have traversed the entire bottom chain of $F$ and by $F\neq s^*$ we can hence flip across $F$ to get the new rope $\pi'$.   The new active edge is the first edge of the top chain of $F$ by Claim~\ref{obs:topIncoming}(3).   The hugging-conditions could be violated only at face $F$ (everywhere else the rope and dual rope are unchanged), and one easily verifies that they hold here because all new edges of $\pi'$ have $F$ to their right.

\begin{figure}[ht]
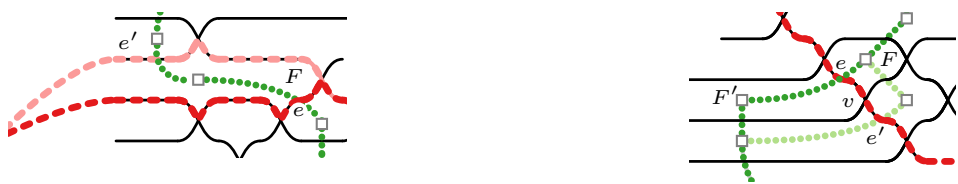

\centering
\begin{minipage}[t]{0.53\linewidth}%
\vspace*{0pt}
\includegraphics[page=4,scale=1.28,trim=12 67 215 40,clip]{figures.pdf}
\end{minipage}
\hspace*{\fill}
\begin{minipage}[t]{0.44\linewidth}
\vspace*{0pt}%
\includegraphics[page=5,scale=1.28,trim=88 30 160 70,clip]{figures.pdf}
\end{minipage}
\caption{Closeup of flipping across a face and a vertex.   Dual graph not shown.
}
\label{fig:can-flip}
\end{figure}

Now assume that $e$ is top incoming, which implies that $v\neq t$ since
otherwise $F=s^*$ and not both are allowed.    Let $F'$ be the face to the
right of $e$; this is in $\pi^*(x,t)$ since $e$ (as active edge) is crossed
by $\pi^*$.  All other incoming edges of $v$ 
are the last edge of the bottom chain of the faces to their left.
Applying Claim~\ref{obs:topIncoming}(4) repeatedly, starting with $F'\in \pi^*(x,t^*)$,
therefore dual rope $\pi^*$ must cross all incoming edges of $v$. 
So by $v\neq t$ we can flip the dual rope across $v$. 
By Claim~\ref{obs:topIncoming}(2) rope $\pi$ continues from $v$ along the bottom 
outgoing edge, which hence becomes the new active edge.    Again one easily
verifies the hugging condition, since all new edges of the new dual rope have
$v$ to their right.
\end{proof}

We hence update $\pi$ and $\pi^*$ as follows.  Let $e$ be the active edge, and let $v$ be its head and $F$ be the face to its left.    
If $F=s^*$ and $v=t$ then $e$ is the last edge of the upper hull. 
By Claim~\ref{obs:topIncoming}(1) hence $\pi$ is the upper hull and the sweep is finished.   By Claim~\ref{obs:topIncoming}(4) $\pi^*$ crosses all incoming edges of $t$, and so the sweep of the dual is also finished.    Otherwise
(either $F\neq s^*$ or $v\neq t$) we perform one of the flips that exists by Claim~\ref{cl:canFlip} and repeat.

\subsection{Analysis}

The sweeping algorithm as described would actually work for any bipolar orientation.
We now show that if the bipolar orientation comes from a pseudoline arrangement $\calA$ 
of $x$-monotone curves, then the rope-length is at most $2n-2$ at all times.  
Enumerate the pseudolines from top to bottom in the order of incidence with $s$ as
$c_1,\dots,c_n$.   The \emph{index} of an edge $e$ is the index of the pseudoline
that supports $e$, i.e., along which $e$ runs.    The following observation is trivial
(it holds since pseudolines intersect only once, so one can go above the other
only once), but will be crucial for counting vertices later.

\begin{observation}
\label{obs:i<j}
At any vertex $v\neq s,t$, the indices of incoming edges increase from 
top to bottom, while the indices of outgoing edges decrease from top to bottom.
\end{observation}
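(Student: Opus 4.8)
The plan is to reduce everything to a pairwise comparison of the pseudolines through $v$. Since $v\neq s,t$, vertex $v$ comes from a crossing point of the arrangement, so a set of (at least two) pseudolines passes through it; list them as $c_{i_1},\dots,c_{i_k}$ with $i_1<\dots<i_k$. Each $c_{i_j}$ is an $x$-monotone infinite curve, hence it enters $v$ from the left and leaves to the right, contributing exactly one incoming edge of index $i_j$ and exactly one outgoing edge of index $i_j$, and these account for all edges at $v$. So it suffices to show that for $i<j$ with both $c_i,c_j$ through $v$, the incoming edge of index $i$ lies above the incoming edge of index $j$, while the outgoing edge of index $i$ lies below the outgoing edge of index $j$.

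The key step is the following observation about two $x$-monotone curves. View each pseudoline as the graph of a continuous function of $x$. For $i<j$, the indexing was chosen so that $c_i$ lies above $c_j$ in the top-to-bottom order at $s$, i.e.\ as $x\to-\infty$. The curves $c_i$ and $c_j$ meet in exactly one point, which --- since both pass through $v$ --- must be $v$ itself. Hence the difference $c_i-c_j$, as a function of $x$, is continuous and vanishes only at $x(v)$, so it has a constant nonzero sign on each of the two $x$-intervals left and right of $x(v)$; this sign is positive (that is, $c_i$ above $c_j$) as $x\to-\infty$, and since the crossing at $v$ is proper it flips at $x(v)$. Therefore $c_i$ runs above $c_j$ immediately to the left of $v$ and below $c_j$ immediately to the right of $v$.

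Applying this to every pair shows that immediately to the left of $v$ the pseudolines through $v$ are stacked, from top to bottom, in the order $c_{i_1},\dots,c_{i_k}$; hence the incoming edges at $v$, read in top-to-bottom rotation order, carry indices $i_1<\dots<i_k$. Symmetrically, immediately to the right of $v$ they are stacked in the reverse order, so the outgoing edges carry indices $i_k>\dots>i_1$, which is exactly the claim. I do not expect a real obstacle here. The only points needing a little care are matching the intuitive ``topmost curve just left of $v$'' with the combinatorially defined topmost incoming edge in the rotation at $v$ (already pinned down when the bipolar orientation was introduced), and noting that the pairwise argument covers degenerate vertices where several pseudolines are concurrent at no extra cost.
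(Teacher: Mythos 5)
Your proof is correct and follows essentially the same reasoning the paper relies on: the paper states this observation without a written proof, remarking only that it holds ``since pseudolines intersect only once, so one can go above the other only once,'' which is exactly the pairwise single-crossing argument you spell out (including the harmless degenerate case of concurrent pseudolines). Nothing further is needed.
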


An \emph{encounter} of rope $\pi$ with pseudoline $c_i$ is a maximal sub-curve
$\pi(v,v')$ that belongs to $c_i$.    Note that $v,v'$ are necessarily
vertices, and possibly $v=v'$.

\begin{corollary}
\label{obs:indexIncreases}
While walking along $\pi(s,x)$, the index $i$ of the current edge of $\pi$
can only increase, and any pseudoline $c_j$ encountered at the next vertex $v$
satisfies $j\geq i$.
\end{corollary}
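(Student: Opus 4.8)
The plan is to derive both statements directly from Observation~\ref{obs:i<j} together with Claim~\ref{obs:topIncoming}(1), which tells us that along $\pi(s,x)$ the rope always enters a vertex $v \neq s$ via its top incoming edge. First I would set up notation: let $v$ be a vertex on $\pi(s,x)$ with $v \neq s$, let $e$ be the edge of $\pi$ entering $v$ (with index $i$), and let $e'$ be the edge of $\pi$ leaving $v$ towards $x$ (with index $i'$); I also want to track the pseudolines encountered at $v$, which form a consecutive block in the vertical order of edges at $v$.

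The key step for ``the index can only increase'': by Claim~\ref{obs:topIncoming}(1), $e$ is the \emph{topmost} incoming edge of $v$. The pseudoline $c_i$ supporting $e$ continues through $v$ along some outgoing edge; since pseudolines cross exactly once, $c_i$ continues along the outgoing edge that keeps it ``on the same side'' — concretely, since $e$ is the topmost incoming edge, $c_i$ leaves $v$ along the topmost outgoing edge (this is exactly the local picture: the topmost incoming and topmost outgoing edge belong to the same pseudoline, because the two chains through $v$ that separate incoming from outgoing edges are walked along by the extreme pseudolines). Now $\pi$ leaves $v$ along \emph{some} outgoing edge $e'$, which by Observation~\ref{obs:i<j} has index $i' \geq i$ because outgoing-edge indices decrease from top to bottom and $c_i$'s outgoing edge is the topmost. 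Hence the index is nondecreasing as we walk along $\pi(s,x)$, and it stays constant precisely during a single encounter.

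For the second statement — that every pseudoline $c_j$ encountered at $v$ satisfies $j \geq i$ — I would argue as follows. The pseudolines encountered at $v$ are exactly those supporting an edge incident to $v$; equivalently, those passing through $v$. Among the incoming edges at $v$, the one with the smallest index is the topmost, which is $e$, of index $i$, by Claim~\ref{obs:topIncoming}(1) and Observation~\ref{obs:i<j}. Every pseudoline through $v$ contributes an incoming edge at $v$ (each has a ``last crossing before $v$'' on that line, or $v$ is where it meets $s$, but $v \neq s$), so every such pseudoline has index $\geq i$. Since $c_j$ is encountered at $v$, it passes through $v$, so $j \geq i$, as claimed. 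I should double-check the boundary case where the encounter starting at $v$ is a single vertex ($v = v'$): there the statement is vacuous/immediate since then $c_j$ is the support of the outgoing edge, whose index is still $\geq i$ by the first part.

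The main obstacle I anticipate is getting the local combinatorics at $v$ exactly right: specifically, the claim that ``the topmost incoming edge and the topmost outgoing edge of a vertex lie on the same pseudoline'' needs the fact that the two boundary chains separating incoming from outgoing edges in the rotation at $v$ are each traced by a single pseudoline, which in turn uses $x$-monotonicity and the crossing-once property. If that turns out to be false in general (e.g.\ at a vertex where the extreme pseudolines change), I would fall back to a purely index-based argument: $c_i$ must leave $v$ somehow, its outgoing edge has index $i$, and since $e$ is topmost incoming, Observation~\ref{obs:i<j} plus a short case analysis on where $c_i$ can exit forces that exit edge to be topmost among outgoing edges; then any other outgoing edge, in particular the one $\pi$ takes, has index $\geq i$. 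Everything else is a one-line appeal to Observation~\ref{obs:i<j} and Claim~\ref{obs:topIncoming}(1).
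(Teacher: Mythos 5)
Your second paragraph is exactly the paper's argument and it is correct: since $\pi$ enters $v$ along the top incoming edge (Claim~\ref{cl:topIncoming}(1)), and since every pseudoline through $v\neq s$ has an incoming edge at $v$ whose indices increase from top to bottom (Observation~\ref{obs:i<j}), the index $i$ of $e$ is the smallest index of any pseudoline incident to $v$; hence every pseudoline encountered at $v$ has index $\geq i$. Note that this already proves the \emph{first} statement too, because the edge along which $\pi$ leaves $v$ lies on a pseudoline encountered at $v$, so its index is $\geq i$. No separate argument is needed.

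The separate argument you give for the first statement, however, contains a genuine error. The claim that ``the topmost incoming and topmost outgoing edge belong to the same pseudoline'' is false at every crossing vertex: already for two crossing pseudolines $c_1$ (top incoming) and $c_2$ (bottom incoming), the topmost outgoing edge belongs to $c_2$, not $c_1$ --- the pseudolines swap, which is precisely the content of Observation~\ref{obs:i<j}. In fact the pseudoline $c_i$ of the topmost incoming edge, having the \emph{smallest} index at $v$, exits along the \emph{bottommost} outgoing edge. Moreover, even granting your premise, the inference is backwards: since outgoing indices decrease from top to bottom, ``$c_i$'s outgoing edge is topmost'' would force every other outgoing edge to have index $\leq i$, not $\geq i$. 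Your fallback argument repeats the same two mistakes. The fix is simply to drop this local picture and argue as in your second paragraph (equivalently, as the paper does), applying the ``smallest index at $v$'' fact to the exit edge of $\pi$ as well.
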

\begin{proof}
Rope $\pi$ enters along the top incoming edge of $v$, 
hence $i$ is the smallest index of a pseudoline incident to $v$.
So all pseudolines encountered at $v$ (including the one along which $\pi$ leaves)
cannot have smaller index. 
\end{proof}

\begin{claim}
\label{cl:contiguous}
While walking along $\pi(s,x)$, we encounter every pseudo-line at most once.
\end{claim}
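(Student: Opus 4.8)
The plan is to prove Claim~\ref{cl:contiguous} almost entirely from the index monotonicity of Corollary~\ref{obs:indexIncreases}, and to patch in one small planarity argument for a degenerate situation. First I would fix terminology: call an encounter of $\pi(s,x)$ with a pseudoline $c_p$ \emph{proper} if it contains at least one edge, and \emph{degenerate} if it is a single vertex $v$. A degenerate encounter with $c_p$ at $v$ happens exactly when $\pi$ runs through the crossing $v=c_q\cap c_p$ along $c_q$, using no edge of $c_p$; by Claim~\ref{cl:topIncoming}(1) the rope enters $v$ on its top incoming edge, so $q<p$ by Observation~\ref{obs:i<j}. (At the vertices $s,t$, which lie on every pseudoline, I adopt the natural convention that near $-\infty$ the rope is ``on'' only the pseudoline carrying its first edge, and symmetrically at $t$, so that $\{s\}$ is never counted as a degenerate encounter with another pseudoline.)

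For proper encounters the claim would be immediate. By Corollary~\ref{obs:indexIncreases} the index of the current edge of $\pi$ is non-decreasing along $\pi(s,x)$, and whenever $\pi$ switches pseudolines the new index is $\ge$ the old one (Corollary) but distinct from it, hence strictly larger. So the edges of $\pi(s,x)$ that lie on $c_p$ form one contiguous block; and if $\pi$ ever leaves $c_p$ within $\pi(s,x)$, then from that moment on every edge has index $>p$, and since the rope enters each subsequent vertex on its top incoming edge, every subsequent vertex lies on two pseudolines of index $>p$ and is therefore not even incident to $c_p$. Hence there is at most one proper encounter with $c_p$, and \emph{no} encounter of any kind with $c_p$ follows a proper one.

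The remaining task is to rule out two encounters with $c_p$ when the first is degenerate, and for this I would track which side of $c_p$ the rope is on. Since $c_q$ and $c_p$ cross exactly once, at $v$, and $c_q$ is above $c_p$ to the left of $v$ (because $q<p$), the edge of $\pi$ just before a degenerate encounter at $v$ lies above $c_p$, while the edge just after it lies below $c_p$. The same side computation shows that the rope always \emph{enters} the starting vertex of any encounter with $c_p$ from above $c_p$: it does not arrive on $c_p$ (that vertex is the \emph{start}), hence arrives on the other pseudoline at that vertex along a top incoming edge, which has smaller index than $p$ and so lies above $c_p$ just to the left. Now suppose $c_p$ had two encounters along $\pi(s,x)$; take two that are consecutive in the order along $\pi$, say $E_1,E_2$. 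By the previous paragraph $E_1$ cannot be proper, so $E_1=\{v\}$ is degenerate and the rope lies below $c_p$ right after $v$; but it lies above $c_p$ right before $E_2$ begins, so somewhere strictly between $v$ and the start of $E_2$ the $x$-monotone curve $\pi$ must cross $c_p$. That crossing point lies on both $\pi$ and $c_p$, hence belongs to an encounter of $\pi$ with $c_p$ strictly between $E_1$ and $E_2$, contradicting that $E_1$ and $E_2$ were consecutive. (The endpoint convention above guarantees the start vertex of $E_2$ is a genuine internal vertex, so this reasoning applies.) This proves the claim.

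I expect the main obstacle to be exactly the degenerate encounters. For proper encounters the statement is a one-line consequence of the index monotonicity, but a degenerate touch can occur while the index of $\pi$ is still below $p$, so Corollary~\ref{obs:indexIncreases} alone does not forbid $\pi$ from meeting $c_p$ again afterwards; the side/planarity argument is what closes this gap. The only delicate bookkeeping is keeping straight which of two pseudolines is above the other before and after their unique crossing, together with the convention handling $s$ and $t$.
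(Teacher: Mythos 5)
Your proof is correct and shares its first half with the paper's: both arguments reduce the case where the rope continues along a higher-indexed pseudoline to the index monotonicity of Corollary~\ref{obs:indexIncreases}, which in a simple arrangement is exactly your ``proper encounter'' case. Where you diverge is in handling the remaining case (your degenerate encounters, equivalently the case where the rope continues along a pseudoline of \emph{smaller} index): you fix two consecutive encounters, track on which side of $c_p$ the rope lies just after the first and just before the second, and invoke continuity of the two $x$-monotone curves to produce an intermediate intersection, contradicting consecutiveness. The paper instead argues globally in one stroke: if the rope leaves $v$ below $c_i$, then $c_i$ has entered the $s^*$-side of the $s^*t^*$-cut defined by $\pi$, and since every vertex of $\pi(s,x)$ is entered along its top incoming edge (Claim~\ref{cl:topIncoming}(1)) no edge from that side can reach $\pi(s,x)$ again; this avoids the proper/degenerate case split, the intermediate-value argument, and the boundary convention you need at $s$. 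Two small caveats about your version: your description of a degenerate encounter (rope passes through $v=c_q\cap c_p$ entering \emph{and} leaving on $c_q$, hence ends up below $c_p$) silently uses that exactly two pseudolines cross at $v$; this matches the simple arrangements the paper works with, but at a higher-degree crossing the rope could leave on a third, higher-indexed pseudoline and stay above $c_p$ --- that subcase is then closed by your own index argument, so it is worth one sentence rather than a real gap. Also, your appeal to Observation~\ref{obs:i<j} to place the top incoming edge on the lowest-indexed pseudoline at the start vertex of the second encounter is exactly the step the paper packages inside Corollary~\ref{obs:indexIncreases}, so the ingredients are the same; only the planarity mechanism differs, with the paper's cut-based formulation being slightly slicker and yours more explicitly geometric.
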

\begin{proof}
Assume for contradiction that we encounter pseudoline $c_i$ at least twice.
At the end of the first encounter we hence have a vertex
$v$ with $v\in c_i\cap \pi(s,x)$, but $\pi$ continues
beyond $v$ along some pseudoline $c_j$ with $j\neq i$.   
If $j>i$, then the index throughout $\pi(v,x)$ is at least $j>i$,
and so we cannot encounter $c_i$ again.   So we must have $j<i$,
which means that the outgoing edge of $\pi$ at $v$ is \emph{below} the outgoing
edge along $c_i$ by Observation~\ref{obs:i<j}.   Therefore $c_i$ has entered
the $s^*$-side
of the $s^*t^*$-cut defined by $\pi$.   Since $\pi(s,x)$ always uses
top incoming edges, there are no edges from the $s^*$-side to $\pi(s,x)$,
and so $c_i$ cannot encounter $\pi(s,x)$ again.
\end{proof}

\begin{claim}
\label{cl:2n-2}
At any time during the sweep, rope $\pi$ has length at most $2n-2$.
\end{claim}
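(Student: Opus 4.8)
The plan is to bound the length of $\pi$ by splitting it at the crossing-point $x$ into $\pi(s,x)$ and $\pi(x,t)$, and to count the vertices (and hence edges) in each half separately using the structure results just established. By Claim~\ref{cl:contiguous}, the walk along $\pi(s,x)$ meets every pseudoline in at most one encounter, and by Corollary~\ref{obs:indexIncreases} the index of the current edge is nondecreasing along $\pi(s,x)$. So I would argue that the internal vertices of $\pi(s,x)$ (those different from $s$ and from $x$, or more precisely from the head of the active edge) can be charged injectively to the pseudolines, giving at most $n$ vertices and hence at most $n-1$ edges on $\pi(s,x)$; symmetrically $\pi(x,t)$ has at most $n-1$ edges. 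Adding these gives length at most $2n-2$. The mild subtlety is accounting exactly: the two halves share the endpoints of the active edge $e$, and $s$, $t$ are not internal, so one has to check the off-by-one bookkeeping carefully — but a clean way is to observe that each edge of $\pi$ has a well-defined index in $\{1,\dots,n\}$, that along $\pi(s,x)$ each index is used by a contiguous block of edges (this is exactly Claim~\ref{cl:contiguous} together with Corollary~\ref{obs:indexIncreases}), so $\pi(s,x)$ decomposes into at most $n$ maximal monochromatic runs, hence has at most $n-1$ \emph{transitions} between consecutive runs plus the edges within runs — which is not yet a bound. So instead I would count vertices directly.

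Concretely, I would show: the number of internal vertices of $\pi(s,x)$ is at most $n-1$. Walk along $\pi(s,x)$ from $s$; at each internal vertex $v$ the rope either stays on the same pseudoline (does not start a new encounter) or switches to a pseudoline of strictly larger index (by Corollary~\ref{obs:indexIncreases}, a switch cannot decrease the index, and Claim~\ref{cl:contiguous} forbids revisiting, so a switch must strictly increase it). Thus I want to charge each internal vertex to the pseudoline the rope is on \emph{after} leaving it, in such a way that a vertex where the rope stays on $c_i$ is also charged to $c_i$; but then $c_i$ could be charged many times. The right fix is the following: the internal vertices of $\pi(s,x)$ split into those where the outgoing edge-index equals the incoming edge-index (call these ``straight'') and those where it strictly increases (``bends''). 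There are at most $n-1$ bends since the index increases from some $i_0\ge 1$ to at most $n$. For the straight vertices, note that at such a vertex $v$ the rope uses the top incoming \emph{and} an outgoing edge of the same index $i$; by Observation~\ref{obs:i<j} the top incoming edge of $v$ has the smallest incoming index and the outgoing edge of $\pi$ being index $i$ forces all other incoming edges of $v$ to have index $<i$ and all other outgoing edges to have index $<i$ — in particular $v$ has at most... hmm.

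Rather than fight that, the cleanest argument — and the one I expect the paper uses — is purely about encounters: $\pi(s,x)$ is a concatenation of at most $n$ encounters (Claim~\ref{cl:contiguous}), and symmetrically $\pi(x,t)$ is a concatenation of at most $n$ encounters on the $(x,t)$-side (by the symmetric versions of Corollary~\ref{obs:indexIncreases} and Claim~\ref{cl:contiguous}, whose proofs go through verbatim with ``top incoming / increase'' replaced by ``bottom outgoing / decrease''). An encounter $\pi(v,v')$ along $c_i$ has length equal to the number of crossings of $c_i$ strictly between $v$ and $v'$, plus $1$; since $c_i$ crosses only the $n-1$ other pseudolines, over \emph{all} encounters with $c_i$ the total number of edges is at most $(n-1)+(\text{number of encounters with }c_i)$. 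As each pseudoline is encountered at most once on $\pi(s,x)$, summing over the at most... no — this overcounts. I would therefore instead bound $|\pi(s,x)|$ by: (number of encounters) $+$ $\sum_i(\text{edges of }c_i\text{ used minus }1)$, and the second sum is at most... The honest statement is that the active edge lies on some $c_{i^\*}$, all edges of $\pi(s,x)$ have index $\le i^\*$ and of $\pi(x,t)$ have index $\le$ the symmetric bound, and careful per-pseudoline accounting yields $|\pi(s,x)|\le n-1$, $|\pi(x,t)|\le n-1$. The main obstacle, and the only real content, is exactly this accounting — making precise why the total edge count per half is $n-1$ rather than something quadratic — and I expect it rests on combining ``each pseudoline encountered at most once per half'' with ``a single encounter with $c_i$ uses at most as many edges as $c_i$ has crossings in that half,'' plus the monotone index structure that prevents the encounters from overlapping in which crossings they consume. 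I would write out the double-counting over (pseudoline, crossing) pairs to close it.
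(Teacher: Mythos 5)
Your first step matches the paper's: charging the vertices of $\pi(s,x)$ injectively to pseudolines (via Claim~\ref{cl:contiguous} and Corollary~\ref{obs:indexIncreases}; the paper does it by assigning to each $v\neq s$ the pseudoline supporting the \emph{bottom} incoming edge at $v$) gives at most $n$ vertices on $\pi(s,x)$, and symmetrically at most $n$ on $\pi(x,t)$. But this alone only yields rope-length $2n-1$, not $2n-2$: the two halves meet at the interior point $x$ of the active edge, so if they have $a\le n$ and $b\le n$ vertices, the rope has $a+b-1\le 2n-1$ edges. Your claimed bound of ``at most $n-1$ edges per half, summing to $2n-2$'' is exactly the off-by-one you flag and never resolve, and it cannot be fixed by per-half accounting: a single half can genuinely attain $n$ vertices (in the tight example of Appendix~\ref{sec:long-instance}, at the moment of maximum length $\pi(s,x)$ consists of $s$ and the $n-1$ crossings along $c_1$ plus part of the active edge). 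None of your attempted refinements (runs/bends, double counting over pseudoline--crossing pairs) is carried through, and you explicitly leave the closing step open, so as written the proposal proves only the bound $2n-1$.

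The missing idea is a \emph{global} argument linking the two halves: one must show they cannot both have $n$ vertices simultaneously. The paper does this by observing that length $2n-1$ would force every pseudoline to be assigned in $\pi(s,x)$; then $c_1$ must be assigned to $s$ and $c_2$ to a vertex that $\pi$ reaches along $c_1$, so $c_1$ and $c_2$ cross at a vertex of $\pi(s,x)$. The $180^\circ$-rotated version of the same argument (swapping $s$ with $t$, top with bottom, incoming with outgoing) forces $c_1$ and $c_2$ to cross again at a vertex of $\pi(x,t)$, contradicting that two pseudolines cross exactly once. Some argument of this kind, using the arrangement property once more beyond the encounter/index structure, is indispensable for the extra $-1$; without it the claim as stated is not proved.
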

\begin{proof}
Assign to $s$ the pseudoline along which $\pi$ leaves, and
assign to every vertex $v\neq s$ on $\pi(s,x)$ the pseudoline $c$ that supports 
the bottom incoming edge $e$ at $v$.   This assigns every pseudoline at
most once, for $e$ was \emph{not} in $\pi(s,x)$ by Claim~\ref{cl:topIncoming},
and so $v$ is the beginning of the unique encounter of $c$ with $\pi(s,x)$.
(This also shows that $c$ was not assigned to $s$).
So $\pi(s,x)$ has at most $n$ vertices, and symmetrically
$\pi(s,t)$ has at most $n$ vertices and the rope-length is at most $2n-1$.

We claim that this is not tight.  Assume for
contradiction that at some point rope $\pi$ has length exactly $2n-1$,
so $\pi(s,x)$ has $n$ vertices and \emph{all} pseudolines have been
assigned to some vertex of $\pi(s,x)$.   Observe that $c_1$ must have been 
assigned to $s$,  for otherwise the index of $\pi(s,x)$ would be greater than 1 throughout, 
so $\pi(s,x)$ could not encounter $c_1$,  so $c_1$ would not be 
assigned to a vertex.     Also observe that $c_2$ must have been assigned
to a vertex $v$ that lies on $c_1$, because it is not assigned to $s$, and
we assign (by Observation~\ref{obs:i<j} and Claim~\ref{cl:topIncoming}(1))
a pseudoline $c_j$ to a vertex $v\neq s$ only if $\pi(s,x)$ has index less
than $j$ when it reaches $v$.  In particular therefore $c_1$ and $c_2$ intersect
at a point on $\pi(s,x)$.   By a completely symmetric argument, 
$c_1$ and $c_2$ intersect again at a point on $\pi(x,t)$.   
This is not possible in a pseudoline arrangement.
\end{proof}

\begin{theorem}
For every pseudoline arrangement of $n$ $x$-monotone curves, 
there exists a sweep with rope-length at most $2n-2$.
\end{theorem}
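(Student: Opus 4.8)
The plan is to assemble the pieces already in hand. The coordinated primal--dual sweep defined above is a legal sweep of $\calA$, and \autoref{cl:2n-2} bounds the length of every rope it produces, so it suffices to argue that the construction actually is a legal sweep. By construction the first rope $\pi_1$ is the lower hull of $\calA$. By \autoref{cl:canFlip} together with the stated update rule, each consecutive pair of ropes is obtained either by flipping the current rope across an inner face of $G_\calA$ (a legal sweep step) or by flipping the current dual rope across an inner vertex of $G_\calA$ (which leaves $\pi$ itself unchanged as a curve and only advances the crossing point $x$). The process halts precisely when the active edge has head $t$ and face $s^*$ to its left, and in that case the discussion following \autoref{cl:canFlip} shows that $\pi$ equals the upper hull. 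Hence, once we know the process halts, the subsequence of distinct ropes is a sweep of $\calA$ in the sense of the definition, and \autoref{cl:2n-2} says each such rope has length at most $2n-2$, as claimed (here $n\ge 2$, the only interesting range).

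The one point that still needs an argument is \emph{termination}. I would prove it with a monovariant. A flip of $\pi$ across a face $F$ requires $\pi$ to contain the bottom chain of $F$ and afterwards makes $\pi$ contain the top chain of $F$; since the partial order on faces induced by $G_\calA^*$ is acyclic, $\pi$ can never again contain the bottom chain of $F$, so each inner face is flipped at most once during the whole run. Symmetrically, a flip of $\pi^*$ across a vertex $v$ requires $\pi^*$ to cross all incoming edges of $v$ and afterwards makes it cross all outgoing edges of $v$; by acyclicity of $G_\calA$ this can happen at most once per inner vertex. Therefore the quantity $\Phi := (\#\text{ face flips so far}) + (\#\text{ vertex flips so far})$ increases by exactly $1$ at every step and is bounded by the finite total number of inner faces and inner vertices of $G_\calA$; hence the run is finite. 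Since \autoref{cl:canFlip} guarantees that a flip is available in every configuration except the terminal one (active edge into $t$ with $s^*$ on its left), the run must halt exactly in that terminal configuration.

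Combining the two parts gives the theorem: the coordinated primal--dual sweep terminates, hence is a legal sweep whose first rope is the lower hull and whose last rope is the upper hull, and by \autoref{cl:2n-2} no rope along the way has length exceeding $2n-2$.

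I expect the only real subtlety to be termination — and, relatedly, the structural fact that a flipped face or vertex never ``returns'' — since all the quantitatively delicate work, namely exploiting that pseudolines pairwise cross exactly once (via \autoref{obs:i<j}, \autoref{obs:indexIncreases}, and \autoref{cl:contiguous}), has already been carried out inside the proof of \autoref{cl:2n-2}. The rest is bookkeeping: checking that the update rule never stalls before the upper hull is reached, which is precisely what \autoref{cl:canFlip} and the paragraph after it provide.
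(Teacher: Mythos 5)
Your proposal is correct and takes essentially the same route as the paper: the theorem is simply the summary of the coordinated primal--dual sweep, with Claim~\ref{cl:canFlip} guaranteeing that the process never stalls before the upper hull is reached and Claim~\ref{cl:2n-2} supplying the $2n-2$ bound on every rope. Your explicit termination argument (each inner face and each inner vertex is flipped at most once because the swept regions in $G_\calA$ and $G_\calA^*$ only grow) is left implicit in the paper but is correct and fills that small gap cleanly.
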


A few comments are in order.   First,
the bound is tight: for
some arrangements, this particular method of computing a sweep requires
rope-length $2n-2$.
\autoref{sec:long-instance} describes a family of examples where
the primal-dual sweep uses ropelength $2n-2$.

Also, our
coordinated primal-dual sweep can be interpreted as a \emph{left-first greedy} sweep: At each stage, the rope $\pi$ selects the leftmost possible position where it can flip over a face. The
dual rope $\pi^*$ can be interpreted as guiding the search for the sweep position:
As long as a flip is not possible at the current position of the active edge, the
active edge advances to the right, and this corresponds to a dual flip.
In fact, this {left-first greedy} method was used by Alvarez and Seidel to sweep a rope over a (hypothetical) triangulation of a set of points, in their algorithm for counting the number of triangulations \cite{AlvarezS13}.

It is striking that the same procedure can be interpreted as a \emph{bottom-first greedy} dual sweep, where the primal rope $\pi$ plays the role of guiding the sweep of~$\pi^*$.

\section{Proof of  Proposition~\ref{prop:proof-example}}
\label{proof-example}

Now that we have defined the
primal-dual sweep, we can apply it to the lower-bound examples
of Section~\ref{sec:lb} and prove Proposition~\ref{prop:proof-example}.

 We use the particular drawing of the arrangement
in Figures~\ref{fig:lowerbound7} and~\ref{fig:lowerbound15}.
The vertices are drawn in $2K+(2K+1)
+(4K+1)+(2K+1)+2K=12K+3$ vertical layers, excluding $s$ and $t$,
as indicated by the vertical lines in the figure.
This gives an a-priori upper bound of $12K+4$ on the length of any
potential rope.
For the claimed bound $\tfrac{7}{4}n-\tfrac{5}{4}=7K+4$,
we need to show that the rope always skips at least $5K$ layers.

The greedy primal-dual sweep algorithm starts with flips in the left part until the rope follows line $c$.
During this time, the rope contains the long horizontal rightmost section of $c$, which
skips $6K+1$ layers, so we are on the safe side.
From now on, the rope will always contain the long first section of $c$, skipping over $2K$ layers. So we need to skip only $3K$ layers in the rest of the rope.

The algorithm will now flip over each face below $F_r$, followed by a ripple 
to the left until the lower
border of $F_c$ is reached. The rope becomes gradually longer and longer, but stays far from getting critical.
In the end of this phase, the rope runs along the lower edge of $F_c\cup F_r$, thereby skipping
$2K+(2K+1)$ layers.
(One step prior to this situation, the rope was one segment longer, and this is where the maximum ropelength so far was achieved.)
Now the rope flips over $F_r$ and gets to the critical situation conjured up in the lower-bound proof: Only $2K+K=3K$ layers are skipped.
Next, the rope flips over $F_c$. From now on, the rope contains the single long edge at the upper bounday of $F_c$, or two of the long edges on either side of the central top diamond,
skipping at least $4K+1$ layers.
This concludes the proof.

\newcommand{\indeg}{d^-}
\newcommand{\outdeg}{d^+}
\newcommand{\cut}[2]{\ensuremath{[#1{:}#2]}}

\section{NP-hardness}
\label{sec:hardness}

In this section, we reduce our sweep-problem to solving {\sc Directed Cutwidth}
in $G_\calA^*$.   Then we show that {\sc Directed Cutwidth} is NP-hard
even in planar graphs with maximum degree~6.   Unfortunately this does
not prove the sweep-problem NP-hard since the graph that we construct cannot
be the dual graph of a pseudo-line arrangement (it has vertices of degree~2 and many sources and sinks).

We need a few definitions.  
Fix a vertex order $\sigma=\langle v_1,\dots,v_n\rangle$ of $G$.  For $1\leq i\leq n$,
the \emph{$i$th cut} (or \emph{cut after $v_i$}) is the set of edges
$(v_h,v_j)$ with $h\leq i< j$.   
The maximum cardinality of these cuts 
is the \emph{width} of the vertex order, 
and the \emph{cutwidth} of graph $G$ is the minimum width over all vertex orders.

The cutwidth is defined for undirected graphs, but for directed acyclic graphs
there exists a natural restriction, apparently first studied in \cite{BFT09}:
The  \emph{directed cutwidth} of a directed acyclic graph $G$ is the
minimum width of a vertex order of $G$ that is a \emph{topological order}, i.e.,
where every edge is directed from a lower-indexed to a larger-indexed vertex.

\begin{lemma}
Let $\calA$ be a pseudo-line arrangement with $x$-monotone curves.   Then
$\calA$ has a  sweep with rope-length at most $w$ if and only if $G_\calA^*$ has
directed cutwidth at most $w$.
\end{lemma}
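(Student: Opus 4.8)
The plan is to establish the equivalence by exhibiting a two-way correspondence between sweeps of $\calA$ and topological vertex orders of $G_\calA^*$, with the rope-length of a sweep matching the width of the corresponding order. First I would recall that a sweep $\pi_1,\dots,\pi_k$ flips across inner faces $F_1,\dots,F_{k-1}$ of $G_\calA$ (in the order the flips occur), and that these faces are exactly the non-pole vertices of $G_\calA^*$. The key point noted already in the excerpt is that the dual edge $F_r \to F_\ell$ forces $F_r$ to be swept before $F_\ell$; consequently the sequence $\langle s^*, F_1, \dots, F_{k-1}, t^*\rangle$ is a topological order $\sigma$ of $G_\calA^*$. Conversely, I would argue that \emph{every} topological order of $G_\calA^*$ arises this way: given such an order, flip the faces in that order; the topological-order condition guarantees that when it is time to flip $F_i$, every face $F_r$ with a dual edge $F_r\to F_i$ — equivalently, every face immediately below an edge $e$ on the current bottom chain of $F_i$ — has already been flipped, so the bottom chain of $F_i$ is contained in the current rope and the flip is legal. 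This gives a bijection between sweeps (up to the order of flips, which is all that matters) and topological orders of $G_\calA^*$.

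Next I would show the rope-length / cut-width correspondence. Fix a sweep and the corresponding topological order $\sigma = \langle s^*=F_0, F_1, \dots, F_{k-1}, F_k = t^*\rangle$. The rope after the first $i$ flips is $\pi_{i+1}$; I claim that its edge set, viewed as a directed $st$-cut in $G_\calA$, is in bijection with the $i$th cut of $\sigma$ in $G_\calA^*$ via edge duality. Indeed, an edge $e = u\to v$ of $G_\calA$ lies on $\pi_{i+1}$ iff the face $F_\ell$ above $e$ has already been swept (flipped) and the face $F_r$ below $e$ has not yet been swept — because $e$ switches from being on $F_\ell$'s bottom chain to $F_r$'s top chain exactly when $F_\ell$ is flipped, and switches off the rope when $F_r$ is flipped. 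In terms of $\sigma$, writing $F_\ell = F_h$ and $F_r = F_j$, this says $h \le i < j$, which is precisely the condition for the dual edge $e^* = F_h \to F_j$ to be in the $i$th cut of $\sigma$. Hence $|\pi_{i+1}|$ equals the cardinality of the $i$th cut of $\sigma$, and taking maxima over $i$ shows the rope-length of the sweep equals the width of $\sigma$.

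Combining the two directions: $\calA$ has a sweep of rope-length $\le w$ iff there is a topological order of $G_\calA^*$ all of whose cuts have size $\le w$, i.e., iff $G_\calA^*$ has directed cutwidth $\le w$. I would be slightly careful about the two boundary ropes $\pi_1$ (lower hull) and $\pi_k$ (upper hull): $\pi_1$ corresponds to the cut after $F_0 = s^*$, whose edges are the outgoing edges of $s^*$ in $G_\calA^*$, i.e. the dual edges of the lower-hull edges of $G_\calA$ — these are exactly the edges having $t^*$ below them, consistent with the initialization in Section~\ref{sec:up-sweep}; symmetrically for $\pi_k$ and the cut before $t^*$. One subtlety worth a sentence is that the definition of "cut after $v_i$" in the excerpt ranges over $1 \le i \le n$, so I should index so that the poles $s^*,t^*$ occupy the first and last positions and the intermediate cuts (after $s^*$ through after $F_{k-1}$) are exactly the $k$ ropes $\pi_1,\dots,\pi_k$.

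I expect the main obstacle to be the converse direction — showing that an arbitrary topological order of $G_\calA^*$ really does yield a legal sweep, i.e. that the bottom chain of the next face to be flipped is always fully present on the current rope. This requires an inductive argument maintaining the invariant that the current rope's edges are exactly those dual edges crossing the current cut, together with the observation that the set of faces directly below the edges of $F_i$'s bottom chain is precisely the set of in-neighbors of $F_i$ in $G_\calA^*$ (this is where the structure of bipolar orientations, and the fact that $F_i$'s boundary is two directed paths, is used). The rope-length/cut-size bijection itself is then a routine consequence of this invariant, and the forward direction is immediate from the "must sweep $F_r$ before $F_\ell$" remark already in the text.
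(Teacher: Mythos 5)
Your overall strategy is the paper's own: identify the rope after each flip with a directed cut of dual edges, identify the flip order with a linear order of the faces of $G_\calA$, and observe that the precedence forced by dual edges is exactly the legality condition for flips; you even sketch the converse direction, which the paper dismisses as ``similar.'' However, several of your directional claims are false under the paper's conventions, because you have silently reversed the orientation of the dual. The paper's dual edge is $e^*=F_\ell\rightarrow F_r$, pointing from the face \emph{above} $e$ to the face \emph{below} $e$, and it is the face below ($F_r$) that must be flipped \emph{first}. Consequently the flip order with the poles attached, $\langle t^*,F_1,\dots,F_{k-1},s^*\rangle$, has every dual edge between inner faces running from right to left; it is the \emph{reverse} of this sequence that is a topological order of $G_\calA^*$ (note also that $s^*$ is the source of $G_\calA^*$, its outgoing dual edges crossing the \emph{upper} hull, not the lower hull as you assert). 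Your claimed topological order $\langle s^*,F_1,\dots,F_{k-1},t^*\rangle$, with the inner faces in flip order, is in general not topological: edges between inner faces run right-to-left while edges at the poles run left-to-right. The paper disposes of this with one explicit sentence --- reverse the order, which does not change any cut size.

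The same reversal infects your key membership criterion. An edge $e$ of $G_\calA$ lies on the current rope iff the face \emph{below} $e$ has already been flipped (or is $t^*$) and the face \emph{above} $e$ has not: an edge enters the rope when the face below it is flipped (it lies on that face's top chain) and leaves when the face above it is flipped (it lies on that face's bottom chain). You state exactly the opposite (``the face $F_\ell$ above $e$ has already been swept and the face $F_r$ below $e$ has not''), and the justification that ``$e$ switches from $F_\ell$'s bottom chain to $F_r$'s top chain'' is confused, since $e$ lies on both chains throughout. Once the dual orientation, the pole placement, and this criterion are corrected, and the order is reversed as above, your argument coincides with the paper's proof of the forward direction; your sketch of the converse (flip faces in reverse topological order, maintaining the invariant that the rope consists precisely of the edges whose below-face is flipped and whose above-face is not, so that all out-neighbours of the next face having been flipped makes its entire bottom chain available) is the right way to fill in the direction the paper omits. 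So the problem is not a missing idea but a consistent orientation error that, as written, makes several of your stated claims false.
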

\begin{proof}
We only show one direction, the other is similar.
Fix a sweep with rope-length $w$.   This defines a sequence $\sigma=\langle F_1,\dots,F_k\rangle$ of the inner faces of $G_\calA$ via the order in which the sweep flips the rope across faces.
We append $s^*=:F_{k+1}$ and pre-pend $t^*:=F_0$ to this sequence since the rope begins incident
to $t^*$ and ends incident to $s^*$.   Sequence $\sigma$ hence gives a vertex
order $F_0,F_1,\dots,F_{k+1}$ of $G_\calA^*$.    Any directed edge $F_\ell\rightarrow F_r$
of $G_\calA^*$ is dual to an edge $e$ of $G_\calA$ that is on the upper chain of $F_r$ and the lower chain of $F_\ell$. 
So the sweep must flip across $F_r$ \emph{before} flipping across $F_\ell$, i.e., $r<\ell$.   So in our face order all edges of $G_\calA^*$ are directed right-to-left, and reversing it (which does not affect the width) gives a topological order.    Finally the edges of the $i$th cut are dual to the edges of the rope after flipping across  $F_i$, and vice versa. Therefore the width of the topological order is the same as the rope-length. 
\end{proof}

So we are interested in the complexity of 
problem {\sc Directed Cutwidth}, 
the 
decision version of
the problem: Given a directed acyclic graph $G$
and an integer $w$, is there a topological order of width at most $w$?
Surprisingly
, the complexity of this problem does not appear to have been 
studied much in the literature.  
Wu et al.~\cite{WuAPL14} showed that {\sc Directed Cutwidth}
(not specifically named there, but appearing in row 6 of their Table~1) 
is SSE-hard to approximate (the constructed graphs are non-planar).  
There are also some positive results; in particular
{\sc Directed Cutwidth} has a linear-time algorithm if $w$ is a constant  \cite{BFT09},
and for series-parallel graphs it can be computed in quadratic time
\cite{BJT23}. 
But we have the following new result: 

\begin{theorem}
{\sc Directed Cutwidth} is NP-hard, even in planar graphs with
maximum degree~6.
\end{theorem}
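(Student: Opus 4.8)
The plan is to reduce from a known NP-hard problem about cuts or linear arrangements. The natural starting point is {\sc Cutwidth} of planar graphs of maximum degree~3, which is NP-hard by Monien and Sutherland~\cite{MonienS88}, but the difficulty is that an undirected instance has no orientation, whereas {\sc Directed Cutwidth} requires a fixed acyclic orientation and only allows topological orders. The key idea is therefore a \emph{gadget that simulates an undirected edge by a directed structure which forces the two endpoints to be separated by a ``window'' of bounded width, regardless of which endpoint comes first in the topological order.} For each undirected edge $uv$ of the max-degree-3 instance $H$, I would introduce a small directed gadget $D_{uv}$ connecting $u$ and $v$, built so that in any topological order the gadget contributes a controlled number of crossing edges to every cut lying strictly between $u$ and $v$, and contributes nothing to cuts outside that range; the orientation inside $D_{uv}$ should be symmetric enough that both relative orders of $u,v$ are feasible and yield the same contribution.

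The main steps, in order, are: (1) Fix the target graph $G$: take the vertices of $H$ together with, for each edge of $H$, the internal vertices of one copy of the gadget $D_{uv}$; orient everything acyclically. To keep $G$ acyclic while allowing both orders of $u$ and $v$, the gadget should route through fresh source/sink vertices so that no directed path is created between $u$ and $v$ in either direction. (2) Calibrate the gadget so that a cut separating $u$ from $v$ costs exactly one unit (mimicking the single undirected edge $uv$) plus a uniform additive overhead that is the same for every cut; then the width of a topological order of $G$ equals the cutwidth of a corresponding vertex order of $H$ plus a fixed constant depending only on $|E(H)|$ and the gadget size. (3) Show the correspondence is a bijection at the level of orders that matters: any topological order of $G$ induces an order of $V(H)$ (by deleting gadget-internal vertices) whose cutwidth is bounded by the width of $G$ minus the overhead, and conversely any vertex order of $H$ can be extended to a topological order of $G$ by inserting each gadget's internal vertices greedily between $u$ and $v$ without raising the width beyond cutwidth$(H)$ plus overhead. (4) Verify the degree and planarity bounds: since $H$ has max degree~3, each vertex of $H$ is incident to at most three gadgets, and if each gadget attaches $O(1)$ edges at $u$ and at $v$ we can keep the maximum degree at~6; planarity is preserved because each $D_{uv}$ can be drawn in a thin strip along the edge $uv$ in a planar embedding of $H$, subdividing that edge.

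The step I expect to be the main obstacle is (2)–(3) together: designing the gadget so that its crossing-edge contribution to every intermediate cut is \emph{exactly} constant (not merely bounded), independent of where inside the $u$–$v$ window the cut falls and independent of the relative order of the other gadget-internal vertices, while simultaneously (a) keeping $G$ acyclic, (b) keeping the attachment degree at $u$ and $v$ down to at most~$3$ each, and (c) ensuring the converse extension does not force gadget-internal vertices of one edge to interleave badly with another edge's vertices. A clean way to handle the interleaving is to make each gadget's internal part a directed path (or a path plus pendant source/sink leaves) so that the only free choice is where this path sits relative to the window $[u,v]$; then an exchange argument shows an optimal topological order can be assumed to place each gadget-path contiguously just after $u$ (or just after $v$), reducing the analysis to the undirected single-edge case. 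Once the per-gadget contribution is pinned down, the width bookkeeping and the NP-hardness conclusion follow routinely.
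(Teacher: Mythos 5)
Your high-level plan is the same as the paper's: reduce from \textsc{Cutwidth} on planar graphs of maximum degree~3, and replace each undirected edge $uv$ by a small directed gadget routed through fresh sources/sinks so that both relative orders of $u$ and $v$ remain feasible, preserving planarity and pushing the degree only up to~6. In the paper this gadget is as simple as possible: one new source $s_e$ with arcs to both $u$ and $v$, and one new sink $t_e$ with arcs from both $u$ and $v$; no internal path, so the contiguity/interleaving worries you raise never arise. However, as written your proposal has two genuine gaps. First, the calibration target in your step~(2) is the wrong one and cannot be met: if a gadget contributes crossing edges only to cuts lying in the window between $u$ and $v$, then the total gadget contribution at any cut position is proportional to the number of $H$-edges whose windows span that position, i.e., to the local cut of $H$ itself. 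So the correct relation is multiplicative, not ``cutwidth of $H$ plus a fixed constant'': the paper proves that $H$ has cutwidth at most $w$ if and only if the directed graph has a topological order of width at most $2w+2$ (the $+2$ coming from $\min\{|L_i|,|R_i|\}\le 1$ at a degree-$3$ vertex). A multiplicative-plus-additive correspondence is perfectly adequate for NP-hardness of the exact decision problem, but the bookkeeping you describe would not close as stated.

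Second, the reverse direction is not ``routine'' and is exactly where the slack bites. Taking a topological order of width $2w+2$ and deleting the gadget vertices yields an order of $V(H)$ whose cuts can be as large as $w+1$, not $w$; the paper shows that equality forces $L_i=\emptyset=R_{i+1}$ at the offending position (an ``improvable pair''), and then swaps $v_i,v_{i+1}$ to strictly decrease that cut without affecting the others --- which in turn needs the harmless preprocessing assumption that $H$ has no isolated vertices or isolated edges. Your sketch asserts that the induced order's cutwidth is bounded by the width minus the overhead, which glosses over precisely this off-by-one phenomenon. So the missing content is concrete: specify the (two-vertex) gadget, prove the forward bound $2w+2$ by explicit placement of each $s_e$ just before and each $t_e$ just after the corresponding endpoints, and handle the reverse direction's $w+1$ slack by the exchange argument rather than by the exact-overhead accounting you propose.
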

\begin{proof}
The reduction is from {\sc Cutwidth}, 
which is known to be NP-hard, even for a planar graph with 
maximum degree 3 \cite{MonienS88}.  
So assume that we are given a planar graph $G$ with maximum degree 3 and
an integer $w$ and we want to test whether its cutwidth is at most~$w$. 
We may assume that $G$ has no isolated vertices or isolated edges:
They do not affect the
cutwidth, except in the trivial case that $G$ consists exclusively of isolated edges and vertices.
We create a directed graph $H$ as follows (see Figure~\ref{fig:GtoH}).  We retain all vertices
of~$G$, and replace every edge $e=(v,w)$ by a source $s_e$ and a sink $t_e$ that are both incident to both $v,w$. 
(A similar transformation, using only a sink, was used in \cite{WuAPL14}.)  

\begin{figure}[ht]
\centering
\includegraphics[
]{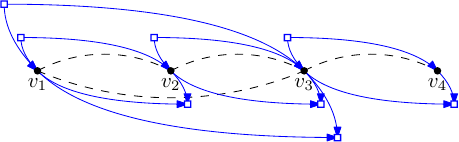}
\caption{From a vertex order of $G$ (black dashed) to a topological
order of $H$ (blue solid).   For ease of reading we offset sources to be
above and sinks to be below vertices of $G$.}
\label{fig:GtoH}
\end{figure}

We claim that $G$ has a vertex order $\sigma_G$ of width at most $w$ if
and only if $H$ has a topological order $\sigma_H$ of width at most
$2w+2$.
This implies the correctness of the reduction.
The remainder of the proof will show the claimed relation between the
widths of $G$ and~$H$  in both directions.

To convert $\sigma_G$ to $\sigma_H$, simply add (for each edge $e$ of $G$)
source $s_e$ just before the first endpoint of $e$ in $\sigma_G$, and
sink $t_e$ just after the second endpoint of $e$ in $\sigma_G$.
This doubles the degrees of all vertices
of $G$ (so the maximum degree of $H$ is 6).   Also the undirected version of
$H$ can be obtained by duplicating all edges of $G$ and then subdividing all
edges; in particular if $G$ is planar then so is $H$.

We want to show that
$\sigma_H$ then has width at most $2w+2$.     To convert $\sigma_H$ to
$\sigma_G$, initially simply take the induced vertex order,   which is easily seen to 
have width at most $w+1$. 
This can be tight (say at the $i$th cut) only if
$v_i$ has no neighbours on the left while $v_{i+1}$ has no neighbours
on the right.   Call such a pair $(v_i,v_{i+1})$ \emph{improvable}:
exchanging the two vertices in the order improves the size of the
$i$th cut and leaves all other cuts after vertices unchanged.   
Exchanging all improvable pairs hence gives the desired~%
$\sigma_G$.


\newcommand{\Cutbefore}[2]{\ensuremath{C^{\shortleftarrow}_{#2}\!(#1)}}
\newcommand{\Cutafter}[2]{\ensuremath{C^{\shortrightarrow}_{#2}\!(#1)}}
\newcommand{\cutafter}[2]{|\Cutafter{#1}{#2}|}
\newcommand{\cutbefore}[2]{|\Cutbefore{#1}{#2}|}


To argue about the relation between the widths,
we need some notation. For any vertex order $v_1,\dots,v_n$ of $G$,
and any $i=1,\dots,n$, write $L_i$ [$R_i$] for the set of edges in $G$
that are incident to $v_i$ and whose other endpoint is left [right] of $v_i$ 
in the vertex order.   
Also, let $B_i$ be the set of edges that \emph{bypass} $v_i$, i.e., have
the form $(v_h,v_j)$ for $h<i<j$, and note that the cut before and after
$v_i$ have size $|B_i|+|L_i|$ and $|B_i|+|R_i|$, respectively.

For both $G$ and $H$, we write $\Cutbefore{v}{}$ and
$\Cutafter{v}{}$ for the cuts directly before and after a vertex~$v$, 
respectively, and indicate with a subscript which graph this applies to.
(The vertex order will be clear from context.)

\begin{claim}
If $G$ has a vertex order $v_1,\dots,v_n$ of width $w$, then $H$ has a 
topological order $\sigma_H$ of width at most $2w+2$.
\end{claim}
\begin{proof}
As sketched earlier, $\sigma_H$ is obtained by inserting, for each edge $e$, the
source just before the left end of $e$ and the sink just after the right end of $e$. 
Put differently, for $i=1,\dots,n$, list all sources of edges of $R_i$
(in arbitrary order), then list $v_i$, the list all sinks of edges in $L_i$
and proceed to the next $i$.  See Figure~\ref{fig:GtoH}
for an example, and verify that we indeed obtain a topological order.
Also notice that scanning $\sigma_H$ from left to right, the cut-sizes increase 
when we pass a source and decrease when we pass a sink, 
so the maximize cut-size of $\sigma_H$ must occur immediately before or
after some original vertex $v_i$ of $G$.   

One verifies that $\Cutbefore{v_i}{H}$ contains exactly two edges each for each edge
in $L_i\cup B_i\cup R_i$, due to edges in $\Cutbefore{v_i}{G}$ and sources for edges
in $R_i$, respectively.     Therefore $\cutbefore{v_i}{H}=
2(|L_i|+|B_i|+|R_i|)$, and by symmetry, 
this is also equal to 
$\cutafter{v_i}{H}$.  Since 
$$|B_i|+\max\{|L_i|,|R_i|\}
= \max\{\cutbefore{v_i}{G},\cutafter{v_i}{G}\}\leq w,$$ 
the width of $\sigma_H$ is at most
$2(|B_i|+\max\{|L_i|,|R_i|\}+\min\{|L_i|,|R_i|\})\le
2w+2\min\{|L_i|,|R_i|\}\leq 2w+2$ since $|L_i|+|R_i|\leq \deg_G(v_i)\leq 3$.
\end{proof}

For the other direction, we must convert a topological order of $H$
into a vertex order of $G$ of small width.
Recall that in a vertex order of $G$, the pair $(v_i,v_{i+1})$ (for some $1\leq i<n$) 
is called an \emph{improvable pair} if $L_i=\emptyset=R_{i+1}$, see also Figure~\ref{fig:HtoG}.

\begin{claim}
If $H$ has a topological order $\sigma_H$ of width $2w+2$, then in the induced
vertex order $v_1,\dots,v_n$ of $G$, the $i$th cut has width at most $w{+}1$
for all $i<n$, and equality holds only if $(v_i,v_{i+1})$ is an improvable pair.
\end{claim}
\begin{proof}
We have to bound $|B_i|+|R_i|$, and will show that all these edges, \emph{and}
the edges of $L_i$, had contributed to $\Cutafter{v_i}{H}$, so there cannot
be too many of them.   For $e\in L_i \cup B_i\cup R_i$, the left end was $v_i$ or
farther left, while the right end was $v_{i}$ or farther right.   Since
$\sigma_H$ is a topological order, 
source $s_e$ was strictly before $v_i$ and sink $t_e$ was strictly after $v_{i}$ in $\sigma_H$,
and so in $\sigma_H$ this contributed two edges to $\Cutafter{v_i}{H}$.
Therefore 
$$2w+2\geq \cutafter{v_i}{H} \geq  2|L_i|+2|B_i|+2|R_i|,$$
which implies that $\cutafter{v_i}{G}=|B_i|+|R_i|\leq w+1$ and
equality can hold only if $L_i=\emptyset$.
Symmetrically arguing via the cut before $v_{i+1}$ in $\sigma_H$,
one sees that
\begin{align*}
2w+2& \geq 2\cutbefore{v_{i+1}}{H} \geq 2(|L_{i+1}|+|B_{i+1}|+|R_{i+1}|) 
\end{align*}
and so 
$\cutafter{v_i}{G}=\cutbefore{v_{i+1}}{G}=|L_{i+1}|+|B_{i+1}|\leq w{+}1$ and
equality can only hold if also $R_{i+1}=\emptyset$.
\end{proof}

\begin{figure}[ht]
\centering
\includegraphics[
page=2]{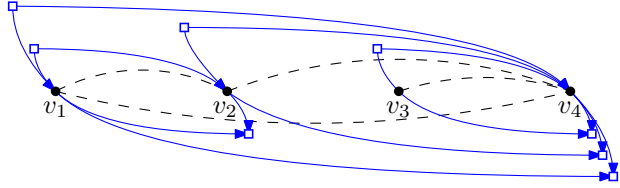}
\caption{From a topological
  order of $H$ (blue solid)
of width $2w+2=6$
  to a vertex order of $G$ (dashed black),
but it may not have optimal width:   $G$ has cutwidth $w=2$ (see 
Figure~\ref{fig:GtoH}), but the cut between $v_3$ and $v_4$ has
width 3.   Note that $L_3=\emptyset=R_4$, i.e., $(v_3,v_4)$ is improvable.}
\label{fig:HtoG}
\end{figure}

Figure~\ref{fig:HtoG} shows an example where the width of the induced
vertex order $\sigma_G$ is indeed $w+1$.
So we are not done yet with the reverse direction of the reduction.   But
observe that if the pair $(v_i,v_{i+1})$ is improvable, then by exchanging their
order
all edges in $R_i$ and $L_{i+1}$
are removed from the cut between them,
except the edge $v_iv_{i+1}$ if it exists.
Since we have excluded the cases that
$v_i$ or $v_{i+1}$ are isolated vertices or
$v_iv_{i+1}$ is an isolated edge, the cut strictly improves.
All other cuts remain
unchanged.
We repeat this until no improvable pair remains.
In the end,
all cut-sizes are at most $w$ as desired, and $G$ has cutwidth
at most $w$.
\end{proof}

\section{Computer experiments}

We ran some computer experiments, exhaustively trying all
pseudoline arrangements with up to $n=9$ pseudolines.
(We used a \textsc{Python} version of the arrangement enumeration algorithm in~\cite{Rote25}.)
Each arrangement was subjected to a rather brute-force attack to
find the shortest rope-length, by essentially looking for a path in the graph
whose nodes represent all possible ropes.
The data that we found are displayed in
\autoref{tab:number}. For $n$ of the form $n=4k+3$, the results on
the maximum agree
with the lower bound of \autoref{thm:lower}.

\begin{table}[htb]
  \centering
  \noindent
  \vbox{
\def\gobble#1{}
\halign{\strut\hfil$#$
   &\hskip-0em\hfil$#$
   &\hskip-0em\hfil$#$
  &\gobble{#}
  &\gobble{#}
  &\gobble{#}
  &\gobble{#}
  &\gobble{#}
  &\hskip-0em\hfil$#$\cr
  n &
 \hbox{min}&
 \hbox{max}&
  \hbox{\#AOT} & \hbox{\#OT}
 & \hbox{\#nonr.\ 
   AOT}
 &
 &
 & \hbox{\#
   PSLA}
\cr
\noalign{\hrule}
2 & 2 & 2 & 1 & 1 & 0 & 0 & 1 & 1 \cr
3 & 4 & 4 & 1 & 1 & 0 & 0 & 1 & 2 \cr
4 & 5 & 5 & 2 & 2 & 0 & 0 & 2 & 8 \cr
5 & 6 & 7 & 3 & 3 & 0 & 0 & 3 & 62 \cr
6 & 7 & 9 & 16 & 16 & 0 & 0 & 12 & 908 \cr
7 & 8 & 11 & 135 & 135 & 0 & 0 & 28 & 24{,}698 \cr
8 & 9 & 12 & 3{,}315 & 3{,}315 & 0 & 0 & 225 & 1{,}232{,}944 \cr
9 & 10 & 14 & 158{,}830 & 158{,}817 & 13 & 0{,}01\,\% & 825 &\ 112{,}018{,}190 \cr
}}

\smallskip
\caption{min/max:
  The shortest and longest rope-length required for
    pseudoline arrangements with $n$ pseudolines.
  \#PSLA: the number of combinatorial types of $x$-monotone
  pseudoline arrangements with $n$ pseudolines
  (sequence
    \href{https://oeis.org/A006245}{A006245}
in
the
Online Encyclopedia of Integer
Sequences
  ).
  }
  \label{tab:number}
\end{table}

The lower bound is apparently $n+1$, except for $n=2$.
The number of 
arrangements
that require the maximum rope-length
grows very quickly.
For example, among
the arrangements of 7 pseudolines, there are exactly two that require
rope-length 11, up to symmetries.
On the other hand, with 8 pseudolines, 
1184 arrangements among the
1,232,944 arrangements need rope-length 12.

\section{Summary and outlook} 

 We have studied the problem of sweeping a pseudoline arrangement of $n$ $x$-monotone curves using a rope between the points of infinity.  The only permitted move is to flip parts of the rope from the bottom chain to the top chain of a face, and the goal is to keep the number of edges on the rope small.   We argue that the worst-case rope-length is in $\Theta(n)$, and specifically, at most $2n-2$ (for all arrangements) and at least $\tfrac{7}{4}n-\tfrac{5}{4}$ (for some arrangements).   

The most tantalizing open problem is the complexity of finding the shortest
rope, possibly for an arbitrary bipolar orientation instead of a
pseudoline arrangement. 
We proved NP-hardness of {\sc DirectedCutwidth}, which is closely related to our
problem via duality.    But the graph that we construct for the NP-hardness
ihas many sources and sinks, and so 
is not the dual graph of a pseudoline arrangement,
and proving NP-hardness of the original problem or finding a polynomial-time
algorithm for it remains open.

Weighted versions could also be of interest, for example if
edge-weights are the edge-lengths in a straight-line drawing of 
$G_\calA$.

\changed{Our sweep by definition is monotone in the sense that every inner face is swept exactly once.   Could a shorter rope-length ever be achieved if we are permitted to reverse some flips?
As discussed at the end of the introduction, we suspect that (as for the homotopy height under some restrictions on the input \cite{CCMOR21}) repeatedly sweeping a face cannot shorten the rope-length, but this remains open.
}
\todo{added open problem; Erin please read}

\bibliographystyle{plainurl} 
\bibliography{ref}

\appendix

\section{An instance where the primal-dual sweep uses the maximum
  ropelength}
\label{sec:long-instance}

\setcounter{topnumber}5
\setcounter{totalnumber}5
\renewcommand{\topfraction}{0.8}
\renewcommand{\bottomfraction}{0.8}
\renewcommand{\textfraction}{0.1}

We show another example of how the sweep is performed in the following sequence of figures.   
The construction consists of $n$
pseudolines $c_1,\dots,c_n$,
enumerated in top-to-bottom order at $s$,
that satisfy the following:
\begin{itemize}
    \item For any $i>1$, the first crossing along $c_i$ is with pseudoline $c_1$.
    \item Let $F$ be the face to the left of the last edge of $c_1$.   Then the top chain of $F$ meets all pseudolines except $c_1$.
\end{itemize}
See Figure~\ref{fig:example_scan_1} for the pseudoline arrangement (for $n=7$) and the initial rope and dual rope.

\newcommand{\figurescale}{0.8}

\begin{figure}[ht]
\centering
\includegraphics[scale=\figurescale]{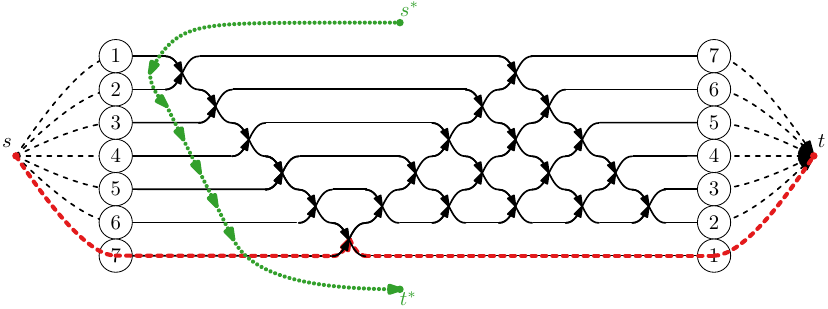}
\vspace{-3mm}
\caption{The arrangement, with initial rope and dual rope.}
\label{fig:example_scan_1}
\end{figure}

We show that if these conditions hold, then the rope-length becomes $2n-2$ at some point
(hence the bound of Claim~\ref{cl:2n-2} is tight).
To see this, observe that the first move is to flip across a face, since the active edge (which is the first edge of pseudoline $c_n$) is bottom incoming.
See Figure~\ref{fig:example_scan_2}.

\begin{figure}[ht]
\centering
\includegraphics[scale=\figurescale,
page=2]{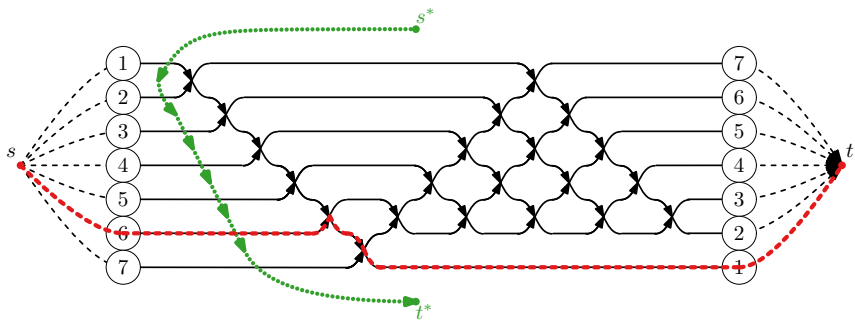}
\vspace{-3mm}
\caption{The situation after the first face-flip.}
\label{fig:example_scan_2}
\end{figure}

The next few moves will \emph{all} be face-flips, because the active edge is always the first edge of pseudoline $c_i$ for some $i>1$, which is bottom incoming because it ends at the intersection with $c_1$.   So we continue face-flips until the active edge is the first edge of $c_1$, and in fact the entire rope is exactly $c_1$.
See Figure~\ref{fig:example_scan_3}.

\begin{figure}[ht]
\centering
\includegraphics[scale=\figurescale,
page=3]{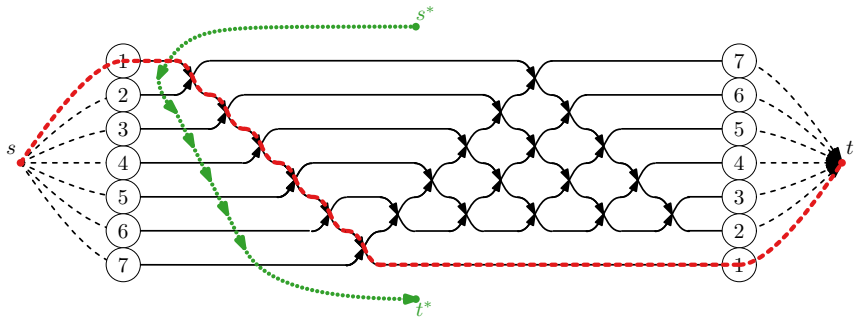}
\vspace{-3mm}
\caption{The situation after repeated face-flips until the rope follows $c_1$.}
\label{fig:example_scan_3}
\end{figure}

Now the active edge is on $c_1$, hence top incoming, and we do a vertex-flip, which pushes the active edge one further down the rope (i.e., along $c_1$).
See Figure~\ref{fig:example_scan_4}.

\begin{figure}[ht]
\centering
\includegraphics[scale=\figurescale,
page=4]{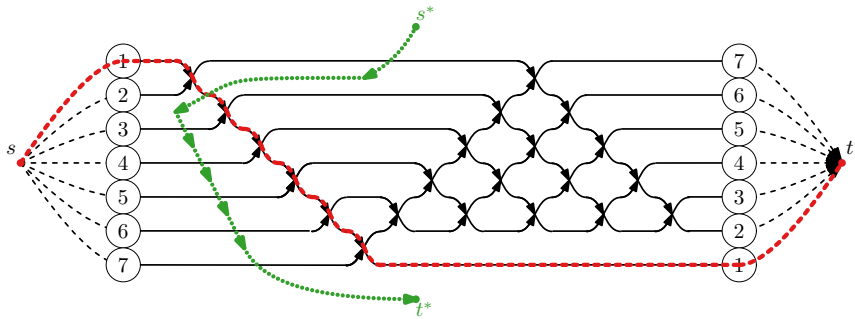}
\vspace{-3mm}
\caption{The situation after the first vertex-flip.}
\label{fig:example_scan_4}
\end{figure}

The next few moves will actually \emph{all} be vertex-flips, because the active edge is always on $c_1$, hence top-incoming if its head is not $t$.   So we continue doing vertex-flips until the active edge is the last edge of $c_1$.
See Figure~\ref{fig:example_scan_5}.

\begin{figure}[htb]
\centering
\includegraphics[scale=\figurescale,
page=5]{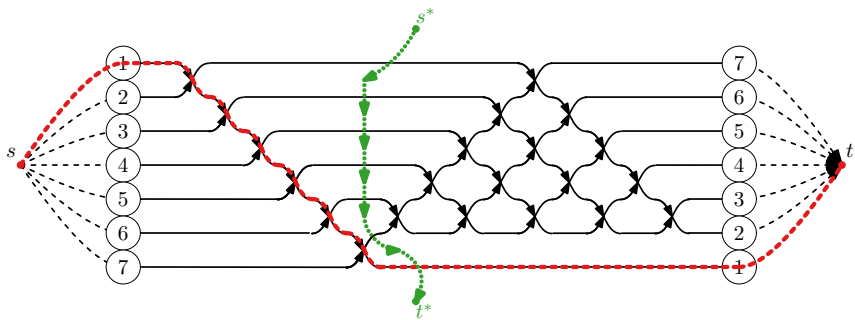}
\vspace{-3mm}
\caption{The situation after repeated vertex-flips until the active edge is the last edge of $c_1$.}
\label{fig:example_scan_5}
\end{figure}

Now the active edge is bottommost incoming
at its head $t$, which means that we do a face-flip at the face $F$ to the left of the active edge.    Recall that we constructed our arrangement so that the upper chain of this face $F$ has length $n-1$.   Also, pseudoline $c_1$ has $n$ edges, of which the rope uses all but the last one.   Therefore at this point the rope has length $2n-2$.
See Figure~\ref{fig:example_scan_6}.

\begin{figure}[htb]
\centering
\includegraphics[scale=\figurescale,
page=6]{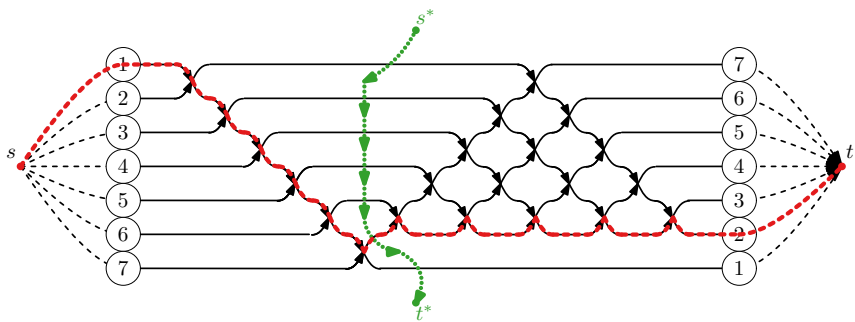}
\vspace{-3mm}
\caption{After one more face-flip, the rope length is $2n-2$.
}
\label{fig:example_scan_6}
\end{figure}


We note that rope-length $2n-2$ is not required in this example if we sweep differently.   In particular, a sweep
with rope-length $n+1$ can be obtained 
by applying the algorithm to the reflected arrangement in which left and right are swapped.

\end{document}